\theoremstyle{plain}
	  \newtheorem{prop}{Proposition}
  \newenvironment{proof}[1][\proofname]{\par
    \normalfont\topsep6\p@\@plus6\p@\relax
    \trivlist
    \itemindent\parindent
    \item[\hskip\labelsep
          \scshape
      #1]\ignorespaces
  }{%
    \endtrivlist\@endpefalse
  }
  \providecommand{\proofname}{Proof}
\theoremstyle{remark}
\newtheorem{rem}{Remark}
\newcommand{\newac}{\newacronym}
\newcommand{\ac}{\gls}
\begin{document}
\title{Secure Cell-Free Integrated Sensing and Communication in the Presence
of Information and Sensing Eavesdroppers}
\author{Zixiang Ren, \textit{Gradutae Student Member, IEEE}, Jie Xu, \textit{Senior
Member, IEEE,} Ling Qiu, \textit{Member, IEEE,} and Derrick Wing Kwan
Ng, \textit{Fellow, IEEE}\thanks{Z. Ren is with the Key Laboratory of Wireless-Optical Communications,
Chinese Academy of Sciences, School of Information Science and Technology,
University of Science and Technology of China, Hefei 230027, China,
and the Future Network of Intelligence Institute (FNii), The Chinese
University of Hong Kong (Shenzhen), Shenzhen 518172, China (e-mail:
rzx66@mail.ustc.edu.cn).}\thanks{L. Qiu is with the Key Laboratory of Wireless-Optical Communications,
Chinese Academy of Sciences, School of Information Science and Technology,
University of Science and Technology of China, Hefei 230027, China
(e-mail: lqiu@ustc.edu.cn).}\thanks{J. Xu is with the School of Science and Engineering (SSE) and the
FNii, The Chinese University of Hong Kong (Shenzhen), Shenzhen 518172,
China (e-mail: xujie@cuhk.edu.cn).}\thanks{D. W. K. Ng is with the University of New South Wales, Sydney, NSW
2052, Australia (e-mail: w.k.ng@unsw.edu.au).}\thanks{L. Qiu and J. Xu are the corresponding authors.}\vspace{-0.3cm}}
\maketitle
\begin{abstract}
This paper studies a secure cell-free integrated sensing and communication
(ISAC) system, in which multiple ISAC transmitters collaboratively
send confidential information to multiple communication users (CUs)
and concurrently conduct target detection. Different from prior works
investigating communication security against potential information
eavesdropping, we consider the security of both communication and
sensing in the presence of information and sensing eavesdroppers
that aim to intercept confidential communication information and extract
target information, respectively. Towards this end, we optimize the
joint information and sensing transmit beamforming at these ISAC transmitters
for secure cell-free ISAC. Our objective is to maximize the detection
probability over a designated sensing area while ensuring the minimum
signal-to-interference-plus-noise-ratio (SINR) requirements at CUs.
Our formulation also takes into account the maximum tolerable signal-to-noise
ratio (SNR) constraints at information eavesdroppers for ensuring the confidentiality
of information transmission, and the maximum detection probability
constraints at sensing eavesdroppers for preserving sensing privacy.
The formulated secure joint transmit beamforming problem is highly
non-convex due to the intricate interplay between the detection probabilities,
beamforming vectors, and SINR constraints. Fortunately, through strategic
manipulation and via applying the semidefinite relaxation (SDR) technique,
we successfully obtain the globally optimal solution to the design
problem by rigorously verifying the tightness of SDR. Furthermore,
we present two alternative joint beamforming designs based on the
sensing SNR maximization over the specific sensing area and the coordinated
beamforming, respectively. Numerical results reveal the benefits of
our proposed design over these alternative benchmarks.
\end{abstract}

\begin{IEEEkeywords}
Secure integrated sensing and communication (ISAC), information eavesdropping,
sensing eavesdropping, joint beamforming design, optimization. 
\end{IEEEkeywords}

\section{Introduction}

Integrated sensing and communication (ISAC) has been identified as
one of the six delineated usage scenarios for future sixth-generation
(6G) wireless networks \cite{itu2030}, which holds the capability
to support a variety of new applications, such as navigation, activity
recognition, environment monitoring, and sensing data acquisition
\cite{khan2022digital,li2023csi}. As a result, ISAC has recently
emerged as one of the hottest topics within the wireless communication
community, spurring extensive research and development \cite{liu2021integrated,liu2023seventy,masouros2021introduction}.
The exploration of ISAC for enhancing both sensing and communication
performances spans different technical perspectives, including fundamental
information theoretic limits \cite{xiong2023fundamental,Haocheng2022,ouyang2023integrated},
transmit waveform design \cite{xiao2022waveform}, beamforming optimization
\cite{liu2020joint,9652071}, active sensing \cite{sohrabi2022active},
and network architectures \cite{RahLusJ20}. In recent years, the
advancements in the multi-antenna technology have significantly enhanced
ISAC performance. In particular, the deployment of multiple antennas
at ISAC transmitters not only provides multiplexing and diversity
gains for substantially enhancing the communication rate and reliability,
but also offers additional degrees of freedom (DoFs) for refining
sensing resolution and accuracy. Furthermore, besides reusing information
beams for the dual sensing purpose, dedicated sensing beams can be
additionally exploited to provide full available sensing DoFs. As
such, the joint information and sensing beamforming design has emerged
as a promising ISAC solution \cite{LiuMasJ18,9652071,Bruno,wang2023globally}.

While initial ISAC research focused on the single-cell scenarios with
a single ISAC transmitter, future 6G wireless networks are expected
to incorporate densely deployed base stations (BSs). With the advancements
in coordinated multi-point (CoMP) transmission \cite{gesbert2010multi},
cloud radio access network \cite{wu2015cloud}, and cell-free multiple-input
multiple-output (MIMO) \cite{bjornson2020scalable}, leveraging multiple
BSs as cooperative ISAC transmitters serves as a promising natural
architecture in further enhancing performance. On the one hand, there
have been several works investigating coordinated beamforming among
multiple BSs for enabling networked ISAC \cite{xu2023integrated,cheng2023optimal,huang2022coordinated},
in which different ISAC transmitters send independent information
and sensing signals to communicate with their respective communication
users (CUs) and perform joint target detection, estimation, or localization
via multi-static or distributed MIMO sensing \cite{liu2022networked}.
By cooperatively designing the coordinated beamforming vectors, ISAC
transmitters not only effectively mitigate interference among
CUs but also achieve enhanced cooperative multi-static sensing. For
instance, \cite{xu2023integrated} introduced a novel approach for
coordinated ISAC in cellular networks, considering a beampattern optimization
problem subject to communication signal-to-interference-plus-noise-ratio
(SINR) constraints and sensing receive power constraints. Furthermore,
the authors in \cite{cheng2023optimal} explored a multi-antenna networked
ISAC system, maximizing the detection probability under communication
SINR requirements and power constraints via jointly optimizing the
information and sensing beamforming. In \cite{huang2022coordinated},
the authors considered the total power minimization problem in a networked
ISAC system by collaboratively designing power control for different
BSs. 

On the other hand, the utilization of cell-free MIMO in ISAC has emerged
as another viable realization of networked ISAC \cite{cellfree22survey},
where BSs or ISAC transmitters are connected to a central controller
to share the communication and sensing data for joint transmission
and collaborative sensing information processing \cite{cellfree22survey}.
Different from coordinated beamforming, cell-free ISAC can achieve
enhanced communication performance by transforming harmful inter-cell
interference into a part of useful information signals, and improve
sensing performance via advanced sensing signal processing \cite{bjornson2020scalable}.
Inspired by these advantages, recent studies have investigated cell-free
ISAC systems from different perspectives \cite{Joint23Yu,demirhan2023cell,behdad2023multi,zeng2023integrated}.
For example, the authors in \cite{Joint23Yu} maximized the sum of
communication and sensing rates by optimizing user association and
power allocation, adopting a conjugate beamforming approach. Furthermore,
in \cite{behdad2023multi}, the authors explored the multi-static
sensing for cooperative target detection in cell-free ISAC, in which
power allocation at different BSs is jointly optimized to maximize
the sensing signal-to-noise ratio (SNR) while ensuring certain communication
SINR requirements. Moreover, the authors in \cite{demirhan2023cell}
optimized transmit information and sensing beamforming vectors at
different BSs, in which the sensing SNR is maximized subject to the
communication SINR constraints and the individual transmit power limitations
at BSs.

The emergence of ISAC networks, however, introduces severe security
concerns in both communication and sensing. First, to facilitate sensing,
the optimized transmit information beams in ISAC systems are deliberately
aimed at sensing targets to enhance sensing performance, thereby posing
a potential risk of information leakage. This risk is particularly
critical when sensing targets include suspicious entities such as
eavesdropping unmanned aerial vehicles (UAVs) or other adversarial
agents, escalating the potential of unauthorized interception of transmitted
information. To address these concerns, several recent works have
been developed to safeguard against unintended information leakage
\cite{wei2022toward,SuLiuChrJ21,xu2022robust,ren23robust}. For example,
to tackle communication security issues in ISAC systems, the authors
in \cite{wei2022toward} explored the interplay between ISAC and secure
communications to enable a multi-function wireless network integrating
sensing, communication, and security. Moreover, the authors in \cite{SuLiuChrJ21}
studied a secure ISAC system with a single eavesdropping target and
multiple CUs, by considering line-of-sight (LoS) channel models with
angle uncertainty for the eavesdropping target. The objective is to
minimize the eavesdropping SINR at the target while satisfying the
requirements for communication SINR at CUs and sensing beampatterns.
Additionally, \cite{xu2022robust} proposed an optimization framework
for robust secure resource allocation in a secure ISAC system, jointly
optimizing transmit beamforming and snapshot length, accounting for
target angle uncertainty. In a related vein, the authors in \cite{ren23robust}
further considered the robust secure transmit beamforming problem
for a single ISAC transmitter communicating with a single CU and detecting
multiple targets, in which the transmit beampattern distortion is
minimized under secrecy rate constraints for CUs with two different
imperfect CSI scenarios. 

Furthermore, the ISAC systems also encounter new sensing security
threats, as the sensing information might be vulnerable to sensing
eavesdroppers (see, e.g., \cite{dimas2019radar}). By leveraging the
sensing signals of ISAC systems, sensing eavesdroppers in ISAC systems
may silently intercept sensing results without actively transmitting
their own signals. Based on the intercepted sensing information, the
adversary may infer the action of associated physical systems and
possibly launch further actions jeopardizing system performance. Indeed,
this passive eavesdropping on sensing information introduces additional
privacy and security challenges, necessitating advanced mechanisms
to ensure confidentiality and integrity of the sensed data \cite{da2023multi,da2023privacy}.
For instance, \cite{da2023multi} investigated the precoder design
in a single ISAC transmitter scenario based on sensing beampattern
distortion. This study introduced a sensing adversary estimation framework
tailored for estimating target location capitalizing on Bayesian
inference. Besides, the authors in \cite{da2023privacy} further considered
the precoder design in a cell-free ISAC system based on sensing SNR
maximization. The study extended the sensing eavesdropper model via
exploiting an expectation maximization method to eavesdrop target
information. However, the prior research has not addressed the aspect
of transmit design to safeguard sensing privacy \cite{da2023privacy,da2023multi}.
Furthermore, there is no existing work considering both communication
and sensing security in cell-free ISAC systems, thus motivating our
work.

This paper investigates a secure cell-free ISAC system, which comprises
multiple ISAC transmitters collaboratively transmitting confidential
information to multiple CUs, while concurrently performing joint target
detection. We consider that there exist both information eavesdroppers
and sensing eavesdroppers in this system, which aim to intercept confidential
communication information and seek to extract sensing target information,
respectively. The main results of this paper are listed as follows.
\begin{itemize}
\item Firstly, we introduce the system model for secure cell-free ISAC systems,
including a communication framework and a multi-static sensing model
that take into account the existence of sensing and information eavesdroppers.
Our setup assumes that sensing receivers are equipped with knowledge
of the transmitted signal, enabling effective clutter signal mitigation.
By contrast, sensing eavesdroppers lack knowledge of the transmit
signals, preventing them from mitigating the impacts caused by sensing
clutters. In this scenario, we derive the detection probability at
sensing receivers by exploiting the generalized likelihood ratio test
(GLRT) detector. Besides, by assuming that eavesdroppers exploit
signal power for target detection due to the lack of signal knowledge,
we derive the closed-form eavesdropping probability underlining the
interplay with different parameters.
\item Next, we formulate the detection probability maximization problem,
subject to the minimum SINR constraints at CUs for ensuring the successful
transmission of confidential information. Meanwhile, the maximum tolerable
SNR constraints at information eavesdroppers and the maximum eavesdropping
probability constraints at sensing eavesdroppers are considered to
safeguard information and sensing privacy, respectively. The formulated
design problem, however, is highly difficult to solve, due to the
inherent intractability caused by complex relationships between transmit
beamforming vectors and legal sensing receivers/illegal sensing eavesdroppers,
as well as the non-convex nature of communication SINR constraints.
Fortunately, we achieve a globally optimal solution through a meticulously
devised three-step approach. Initially, we reformulate the detection
probabilities for legal sensing receivers and the eavesdropping probabilities for
illegal sensing eavesdroppers to facilitate problem tractability.
Subsequently, we relax the beamforming design problem by exploiting a
semidefinite relaxation (SDR) approach \cite{luo2010semidefinite},
leading to a convex version that can be optimally solved with off-the-shelf
toolboxes. Finally, rigorous proof of the relaxation's tightness is
presented to verify the global optimality of the obtained solution.
\item On the other hand, to cope with the needs in different practical scenarios,
we present two alternative transmit beamforming designs based on the
sensing SNR maximization and the coordinated beamforming, respectively.
For the sensing SNR maximization design, our goal is to maximize the
sensing power at the target direction, while in the coordinated beamforming
design, each CU is associated with a specific BS for independent signal
transmission. 
designs.
\item Finally, numerical results are provided to validate the effectiveness
of our proposed design, with comparisons against benchmarking sensing
SNR maximization and coordinated beamforming. It is shown that via
joint signal processing in the central controller, the proposed cell-free
design effectively exploits the signal correlation among different
transmitters, and also strategically
utilizes the inherent sensing clutters to jam sensing eavesdroppers,
thus ensuring the sensing security while improving the detection probability.
\end{itemize}

The remainder of this paper is organized as follows. Section II introduces
the system model. Section III derives the detection probability and
eavesdropping probability at sensing receivers and sensing eavesdroppers,
respectively. Section IV formulates the joint transmit beamforming problem for secure cell-free ISAC, and develops
a globally optimal solution to the formulated problem. Section
V presents two alternative design approaches based on SNR maximization
and coordinated beamforming, respectively. Section VI presents numerical
results. Finally, Section VII concludes this paper.

\textit{Notations}: Vectors and matrices are denoted by bold lower-
and upper-case letters, respectively. $\mathbb{C}^{N\times M}$ denotes
the space of $N\times M$ complex matrices. $\boldsymbol{I}$ and
$\boldsymbol{0}$ represents an identity matrix and an all-zero matrix
with appropriate dimensions, respectively. For a square matrix $\boldsymbol{A}$,
$\textrm{tr}(\boldsymbol{A})$ denotes its trace and $\boldsymbol{A}\succeq\boldsymbol{0}$
means that $\boldsymbol{A}$ is positive semi-definite. For a complex
arbitrary-size matrix $\boldsymbol{B}$, $\boldsymbol{B}[i,j]$, $\textrm{rank}(\boldsymbol{B})$,
$\boldsymbol{B}^{T}$, $\boldsymbol{B}^{H}$, and $\boldsymbol{B}^{c}$
denote its $(i,j)$-th element, rank, transpose, conjugate transpose,
and complex conjugate, respectively. For a vector $\boldsymbol{a}$,
$\boldsymbol{a}[i]$ denotes its $i$-th element. $\mathbb{E}(\cdot)$
denotes the statistical expectation. $\|\cdot\|$ denotes the Euclidean
norm of a vector. $|\cdot|$, $\mathrm{Re}(\cdot)$, and $\mathrm{Im}(\cdot)$
denote the absolute value, the real component, and the imaginary component
of a complex entry. $\mathcal{CN}(\boldsymbol{x},\boldsymbol{Y})$
denotes a \ac{cscg} random vector with mean vector $\boldsymbol{x}$
and covariance matrix $\boldsymbol{Y}$. $\boldsymbol{A}\otimes\boldsymbol{B}$
represents the Kronecker product of two matrices $\boldsymbol{A}$
and $\boldsymbol{B}$. $\mathrm{blkdiag}(\cdot)$ constructs a block
diagonal matrix with its entities. $\frac{\partial(\cdot)}{\partial}$
denotes the operator of a partial derivative.

\section{System Model }

We consider a secure cell-free ISAC system as shown in Fig. 1, which
comprises $M_{t}$ ISAC transmitters, $M_{r}$ sensing receivers,
$K$ single-antenna CUs, as well as $L$ single-antenna information
eavesdroppers and $Q$ sensing eavesdroppers. Let $\mathcal{M}_{t}\overset{\triangle}{=}\{1,\dots,M_{t}\}$,
$\mathcal{M}_{r}\overset{\triangle}{=}\{1,\dots,M_{r}\}$, $\mathcal{K}\overset{\triangle}{=}\{1,\dots,K\}$,
$\mathcal{L}\overset{\triangle}{=}\{1,\dots,L\}$, and $\mathcal{Q}\overset{\triangle}{=}\{1,\dots,Q\}$
denote the sets of ISAC transmitters, sensing receivers, CUs, information
eavesdroppers, and sensing eavesdroppers, respectively. Without loss
of generality, we assume that each ISAC transmitter, sensing receiver,
and sensing eavesdropper in our system is equipped with an array of
$N$ antennas. 

In this ISAC system, the objective is to address
the communication requirements of the $K$ CUs while simultaneously
conducting sensing operations in a specific area of interest. The
central controller coordinates the ISAC transmitters and sensing receivers
to ensure the security in the cell-free ISAC system. It is assumed
that all ISAC transmitters and sensing receivers achieve perfect synchronization
facilitated by the central controller \cite{behdad2023multi,demirhan2023cell}.
Additionally, we consider a basic scenario in which there is no collaboration
among any sensing or information eavesdroppers.
\begin{figure}[H]
\centering\includegraphics[scale=0.25]{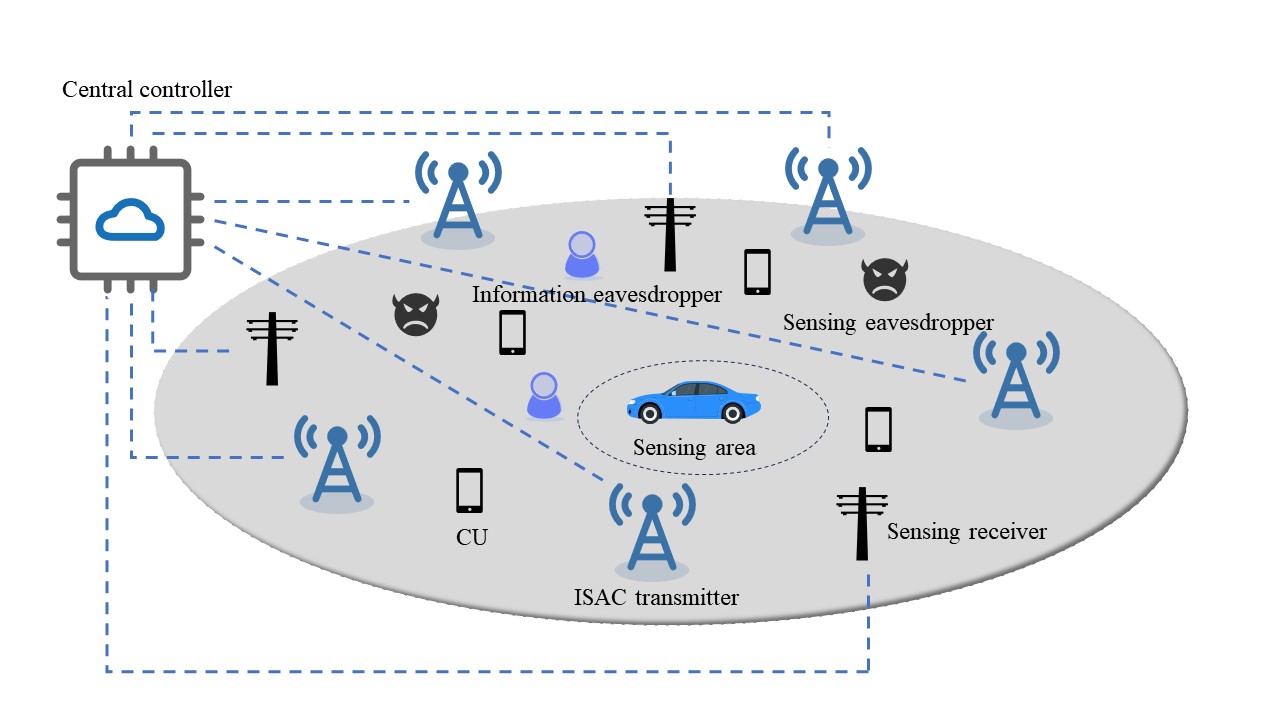}\caption{Illustration of the secure cell-free ISAC system.}
\end{figure}

\subsection{Communication Model}

In our framework, we focus on the ISAC transmission over a period of $T$ symbols, where $T$ is assumed
to be sufficiently large. We denote $\hat{s}_{k}^{\mathrm{I}}(t)\in\mathbb{C}$
as the desired information signal for CU $k\in\mathcal{K}$ in the
$t$-th symbol. We model $\hat{s}_{k}^{\mathrm{I}}(t)$'s
as independent and identically distributed (i.i.d.) CSCG random variables
each with zero mean and unit variance, i.e., $\hat{s}_{k}^{\mathrm{I}}(t)\sim\mathcal{CN}(0,1)$.
Let $\hat{\boldsymbol{w}}_{i,k}\in\mathbb{C}^{N\times1}$ denote the
transmit beamforming vector at transmitter $i\in\mathcal{M}_{t}$
for CU $k$. We define $\boldsymbol{w}_{k}\in\mathbb{C}^{NM_{t}\times1}$
as the beamforming vector spanning all the $M_{t}$ ISAC transmitters
for CU $k\in\mathcal{K}$, i.e., 
\begin{equation}
\boldsymbol{w}_{k}=[\hat{\boldsymbol{w}}_{1,k}^{T},\dots,\hat{\boldsymbol{w}}_{M_{t},k}^{T}]^{T}.
\end{equation}
Moreover, we assume that all ISAC transmitters collaboratively employ
dedicated sensing signals to fully exploit the available DoFs for
the purpose of sensing \cite{liu2020joint}. Let $\hat{\boldsymbol{s}}_{i}^{\mathrm{S}}(t)$
denote the dedicated sensing signal at ISAC transmitter $i\in\mathcal{M}_{t}$.
In this context, we define $\boldsymbol{s}^{\mathrm{S}}(t)\in\mathbb{C}^{NM_{t}\times1}$
as the dedicated sensing signal in the $t$-th symbol, spanning all the $M_{t}$ ISAC transmitters, i.e.,
\begin{equation}
\boldsymbol{s}^{\mathrm{S}}(t)=[\big(\hat{\boldsymbol{s}}_{1}^{\mathrm{S}}(t)\big)^{T},\dots,\big(\hat{\boldsymbol{s}}_{M_{t}}^{\mathrm{S}}(t)\big)^{T}]^{T}.
\end{equation}
 We define the covariance of the dedicated sensing signal $\boldsymbol{s}^{S}(t)$
as 
\begin{equation}
\boldsymbol{S}=\mathbb{E}\big(\boldsymbol{s}^{\mathrm{S}}(t)(\boldsymbol{s}^{\mathrm{S}}(t))^{H}\big).
\end{equation}
Without loss of generality, we assume that $\boldsymbol{S}$ is a
general-rank matrix, serving as an optimization variable in our system.
Typically, the number of dedicated sensing beams is determined by
the rank of $\boldsymbol{S}$. Consequently, the transmitted signal
at ISAC transmitter $i\in\mathcal{M}_{t}$ is expressed as \cite{behdad2023multi}
\begin{eqnarray}
\boldsymbol{x}_{i}(t) & = & \sum_{k=1}^{K}\hat{\boldsymbol{w}}_{i,k}\hat{s}_{k}^{\mathrm{I}}(t)+\hat{\boldsymbol{s}}_{i}^{\mathrm{S}}(t).
\end{eqnarray}
Let $\boldsymbol{x}(t)=[\boldsymbol{x}_{1}^{T}(t),\dots,\boldsymbol{x}_{M_{t}}^{T}(t)]^{T}$
denote the accumulated transmitted signal across all the $M_{t}$ ISAC transmitters.
We define the transmit covariance across all the ISAC transmitters as 
\begin{eqnarray}
\boldsymbol{R} & = & \mathbb{E}\big(\boldsymbol{x}(t)\boldsymbol{x}^{H}(t)\big)=\sum_{k=1}^{K}\boldsymbol{w}_{k}\boldsymbol{w}_{k}^{H}+\boldsymbol{S}.\label{eq:cov}
\end{eqnarray}

Let $\hat{\boldsymbol{h}}_{i,k}\in\mathbb{C}^{N\times1}$ denote the
channel vector between CU $k\in\mathcal{K}$ and ISAC transmitter
$i\in\mathcal{M}_{t}$. Here, we introduce the overall channel from
CU $k\in\mathcal{K}$ to all $M_{t}$ transmitters as $\boldsymbol{h}_{k}\in\mathbb{C}^{NM_{t}\times1}$,
i.e., 

\begin{equation}
\boldsymbol{h}_{k}=[\hat{\boldsymbol{h}}_{1,k}^{T},\dots,\hat{\boldsymbol{h}}_{M_{t},k}^{T}]^{T}.
\end{equation}
 As a result, the received signal at CU $k\in\mathcal{K}$ is expressed
as \eqref{eq:signal} at the top of the next page, which consists of three
main components, i.e., desired signal, multi-user interference, and sensing
signal interference. \begin{figure*}

\centering
\begin{align}
y_{k}(t)=\sum_{i=1}^{M_{t}}\hat{\boldsymbol{h}}_{i,k}^{H}\boldsymbol{x}_{i}(t)+n_{k}(t) & =\underset{\textrm{Desired signal}}{\underbrace{\boldsymbol{h}_{k}^{H}\boldsymbol{w}_{k}\hat{s}_{k}^{I}(t)}}+\underset{\textrm{Multi-user interference}}{\underbrace{\sum_{j=1,j\neq k}^{K}\boldsymbol{h}_{k}^{H}\boldsymbol{w}_{j}\hat{s}_{j}^{I}(t)}}+\underset{\textrm{Sensing signal interference}}{\underbrace{\boldsymbol{h}_{k}^{H}\boldsymbol{s}^{\mathrm{S}}(t)}}+n_{k}(t)\label{eq:signal}
\end{align}
\hrule\end{figure*} Here, $n_{k}(t)$ is the independent Gaussian
noise with a zero mean and variance $\sigma^{2}$ at CU $k\in\mathcal{K}$
in the $t$-th symbol, i.e., $n_{k}(t)\sim\mathcal{CN}(0,\sigma^{2})$.
As such, the received SINR at the receiver of CU $k\in\mathcal{K}$
is given in \eqref{eq:sinr} at the top of the next page. \begin{figure*}
\begin{equation}
\gamma_{k}(\{\boldsymbol{w}_{k}\},\boldsymbol{S})=\frac{|\sum_{i=1}^{M_{t}}\hat{\boldsymbol{h}}_{i,k}^{H}\hat{\boldsymbol{w}}_{i,k}|^{2}}{\sum_{j=1,j\neq k}^{K}|\sum_{i=1}^{M_{t}}\hat{\boldsymbol{h}}_{i,k}^{H}\hat{\boldsymbol{w}}_{i,j}|^{2}+\mathbb{E}\big(|\sum_{i=1}^{M_{t}}\hat{\boldsymbol{h}}_{i,k}^{H}\hat{\boldsymbol{s}}_{i}^{\mathrm{S}}(t)|^{2}\big)+\sigma^{2}}=\frac{|\boldsymbol{h}_{k}^{H}\boldsymbol{w}_{k}|^{2}}{\sum_{j=1,j\neq k}^{K}|\boldsymbol{h}_{k}^{H}\boldsymbol{w}_{j}|^{2}+\boldsymbol{h}_{k}^{H}\boldsymbol{S}\boldsymbol{h}_{k}+\sigma^{2}}.\label{eq:sinr}
\end{equation}

\centering\hrule\end{figure*}It is worth noting that the interference
term in \eqref{eq:sinr} stems from two aspects, i.e., multi-user
interference and sensing signal interference \cite{demirhan2023cell}. 

\subsection{Information Eavesdropping}

In this subsection, we focus on the information eavesdropping model, in which information eavesdropper $l\in\mathcal{L}$
may attempt to intercept confidential information intended for any CU
$k\in\mathcal{K}$. Let $\hat{\boldsymbol{g}}_{i,l}\in\mathbb{C}^{N\times1}$
represent the channel vector from ISAC transmitter $i\in\mathcal{M}_{t}$
to information eavesdropper $l\in\mathcal{L}$. Let $\boldsymbol{g}_{l}\in\mathbb{C}^{NM_{t}\times1}$
denote the accumulated channel from all ISAC $M_{t}$ transmitters to eavesdropper $l$,
i.e., 
\begin{equation}
\boldsymbol{g}_{l}=[\hat{\boldsymbol{g}}_{1,l}^{T},\dots,\hat{\boldsymbol{g}}_{M_{t},l}^{T}]^{T}.
\end{equation}
The received signal at information eavesdropper $l\in\mathcal{L}$
is given by \eqref{eq:eavesignal} at the top of next page, \begin{figure*}

\centering 
\begin{equation}
\tilde{y}_{l}(t)=\sum_{i=1}^{M_{t}}\hat{\boldsymbol{g}}_{i,l}^{H}\boldsymbol{x}_{i}(t)=\boldsymbol{g}_{l}^{H}\boldsymbol{w}_{k}\hat{s}_{k}^{I}(t)+\sum_{j=1,j\neq k}^{K}\boldsymbol{g}_{l}^{H}\boldsymbol{w}_{j}\hat{s}_{j}^{I}(t)+\boldsymbol{g}_{l}^{H}\boldsymbol{s}^{\mathrm{S}}(t)+\tilde{n}_{l}(t).\label{eq:eavesignal}
\end{equation}
\hrule\end{figure*}in which $\tilde{n}_{l}(t)$ denotes the noise
at the receiver that is an i.i.d. CSCG random variable with a zero
mean and variance $\sigma^{2}$. It is assumed that there is no cooperation
among different information eavesdroppers. 

Notice that if information eavesdropper $l\in\mathcal{L}$ is aware
of the channel $\boldsymbol{g}_{l}$ and transmitted sensing signal
sequence $\{\hat{\boldsymbol{s}}_{i}^{\mathrm{S}}(t)\}$, then it
may be able to effectively cancel the interference caused by sensing
signals via advanced signal processing \cite{10086626}. Subsequently, by employing successive interference
cancellation (SIC) \cite{chen2018design}, each information eavesdropper
can proceed to cancel the information signal intended for other CUs (if they are decoded)
before attempting to decode the signal of CU $k\in\mathcal{K}$. As
a result, we impose the worst-case assumption that information eavesdropper
$l\in\mathcal{L}$ can perfectly cancel the information signal for
other CUs and the dedicated sensing signal. In this case and under the
assumption without cooperation among different information eavesdroppers,
the received eavesdropping SNR at information eavesdropper $l\in\mathcal{L}$
for intercepting signals for CU $k\in\mathcal{K}$ is given by
\begin{equation}
\tilde{\gamma}_{l,k}(\{\boldsymbol{w}_{k}\},\boldsymbol{S})=\frac{|\boldsymbol{g}_{l}^{H}\boldsymbol{w}_{k}|^{2}}{\sigma^{2}}.
\end{equation}
To protect the communication security, in this paper we need to ensure
that the information eavesdropping SNR $\tilde{\gamma}_{l,k}(\{\boldsymbol{w}_{k}\},\boldsymbol{S})$ at any information eavesdropper $l\in\mathcal{L}$
should not exceed a given threshold for all CU $k\in\mathcal{K}$.

\subsection{Multi-static Sensing}

In this subsection, we consider the multi-static sensing model within
this cell-free network. In this scenario, the central controller aggregates
the received signals from all $M_{r}$ receivers to perform joint
target detection. First, we assume that the sensing receivers are
aware of the transmitted signal $\boldsymbol{x}_{i}(t)$ and the environmental
information, including clutters information introduced by stationary
objects and LoS path. As such, the sensing receivers possess
the capability to efficiently mitigate signals originating from clutters
and LoS path \cite{behdad2023multi}. In case that the target is present,
the received signal at sensing receiver $j\in\mathcal{M}_{r}$ in
the $t$-th symbol is expressed as 
\begin{equation}
\boldsymbol{r}_{j}(t)=\sum_{i=1}^{M_{t}}\alpha_{i,j}\boldsymbol{a}_{r}(\varphi_{j})\boldsymbol{a}_{t}^{H}(\theta_{i})\boldsymbol{x}_{i}(t)+\bar{\boldsymbol{n}}_{j}(t),
\end{equation}
where $\alpha_{i,j}\in\mathbb{C}$ represents the complex coefficient
characterizing the influence of path-loss and the target \ac{rcs} between ISAC
transmitter $i\in\mathcal{M}_{t}$ and sensing receiver $j\in\mathcal{M}_{r}$,
which is an unknown deterministic parameter. Additionally, $\theta_{i}$
denotes the \ac{aod} from ISAC transmitter $i\in\mathcal{M}_{t}$
to the target and $\varphi_{j}$ denotes the \ac{aoa} from the target
to sensing receiver $j\in\mathcal{M}_{r}$. Furthermore, $\boldsymbol{a}_{t}(\cdot)$
and $\boldsymbol{a}_{r}(\cdot)$ represent the transmit and receive
steering vectors, respectively, and $\bar{\boldsymbol{n}}_{j}$ represents
the Gaussian noise at sensing receiver $j\in\mathcal{M}_{r}$, with
each element having a zero mean and a variance of $\sigma_{s}^{2}$.
For notational convenience, we denote 
\begin{equation}
\boldsymbol{\phi}_{i,j}(t)\overset{\triangle}{=}\boldsymbol{a}_{r}(\varphi_{j})\boldsymbol{a}_{t}^{H}(\theta_{i})\boldsymbol{x}_{i}(t)
\end{equation}
as the signal received by sensing receiver $j\in\mathcal{M}_{r}$
from ISAC transmitter $i\in\mathcal{M}_{t}$ reflected by the target
excluding the influence of $\alpha_{i,j}$. It is worth noting that
$\boldsymbol{\phi}_{i,j}(t)$ is assumed to be perfectly known at
sensing receivers and thus can be utilized to design a target detector.
Let $\boldsymbol{\Phi}_{j}(t)=[\boldsymbol{\phi}_{1,j}(t),\dots,\boldsymbol{\phi}_{M_{t},j}(t)]\in\mathbb{C}^{N\times M_{t}}$
denote the received signal at sensing receiver $j\in\mathcal{M}_{r}$
from all ISAC transmitters. The reflected signal at all sensing receivers
from all ISAC transmitters in the $t$-th symbol is given as
\begin{equation}
\boldsymbol{\Phi}(t)=\mathrm{blkdiag}(\boldsymbol{\Phi}_{1}(t),\dots,\boldsymbol{\Phi}_{M_{r}}(t))\in\mathbb{C}^{NM_{r}\times M_{t}M_{r}}.\label{eq:Y}
\end{equation}
As a result, the concatenated received signal over all $M_{r}$ sensing
receivers at the $t$-th symbol is given by
\begin{equation}
\boldsymbol{\phi}_{s}(t)=\boldsymbol{\Phi}(t)\boldsymbol{\alpha}+\boldsymbol{n}_{s}(t),
\end{equation}
where $\boldsymbol{\alpha}=[\alpha_{1,1},\dots,\alpha_{1,M_{r}},\dots,\alpha_{M_{t},M_{r}}]^{T}$.
Furthermore, define $\boldsymbol{\Psi}=[\boldsymbol{\Phi}^{T}(1),\dots,\boldsymbol{\Phi}^{T}(T)]^{T}\in\mathbb{C}^{NM_{r}T\times M_{t}M_{r}}$
as the concatenated signal formed by $\boldsymbol{\Phi}(t)$ over
total $T$ symbols. Let $\boldsymbol{r}_{\mathrm{S}}\in\mathbb{C}^{NM_{r}T\times1}$
denote the received concatenated signal at all the $M_{r}$ sensing
receivers over $T$ symbols, which is given by
\begin{equation}
\boldsymbol{r}_{\mathrm{S}}=\boldsymbol{\Psi}\boldsymbol{\alpha}+\boldsymbol{n}_{s},\label{eq:receive}
\end{equation}
where $\boldsymbol{n}_{s}\in\mathbb{C}^{NM_{r}T\times1}$ denotes
the noise at the sensing receivers with $\boldsymbol{n}_{s}\sim\mathcal{CN}(\boldsymbol{0},\sigma_{s}^{2}\boldsymbol{I})$.

Let null hypothesis $\mathcal{H}_{0}$ represent that there
is no target in the sensing area, and the alternative hypothesis
$\mathcal{H}_{1}$ represent the existence of the target. As a result,
we formulate the hypothesis as
\begin{equation}
\begin{cases}
\begin{array}{cl}
\mathcal{H}_{0}: & \boldsymbol{r}_{\mathrm{S}}=\boldsymbol{n}_{s},\\
\mathcal{H}_{1}: & \boldsymbol{r}_{\mathrm{S}}=\boldsymbol{\Psi}\boldsymbol{\alpha}+\boldsymbol{n}_{s}.
\end{array}\end{cases}\label{eq:hypo}
\end{equation}
Let $p_{\mathrm{D}}(\{\boldsymbol{w}_{k}\},\boldsymbol{S})$ denote
the detection probability at sensing receivers based on \eqref{eq:hypo}, which is
a function of design variables $\{\boldsymbol{w}_{k}\}$ and $\boldsymbol{S}=\mathbb{E}\big(\boldsymbol{s}^{\mathrm{S}}(t)(\boldsymbol{s}^{\mathrm{S}}(t))^{H}\big)$ to be determined in Section III.

\subsection{Sensing Eavesdropping}

In this subsection, we investigate the sensing eavesdropping model.
Since the transmitted signal $\boldsymbol{x}(t)$ is confidential,
these sensing eavesdroppers are assumed to lack knowledge about $\boldsymbol{x}(t)$,
rendering sensing techniques requiring $\boldsymbol{x}(t)$ inapplicable.
Generally, the sensing eavesdroppers' goal is to detect the transmitted
power from a specific direction to perform target detection. In this
context, we assume that sensing eavesdropper $q\in\mathcal{Q}$ adopts
receive beamforming with a beamforming vector $\boldsymbol{a}_{r}^{H}(\omega_{q})$
to detect the received power along the target direction, similar to
a passive radar technique \cite{6803957}, where $\omega_{q}$ denotes
the AoD from the target to sensing eavesdropper $q\in\mathcal{Q}$.
Let $\boldsymbol{U}_{i,q}\in\mathbb{C}^{N\times N}$ denote the clutter
channel from ISAC transmitter $i\in\mathcal{M}_{t}$ to sensing eavesdropper
$q\in\mathcal{Q}$. We assume that the 
clutter information is available to both the central controller and
sensing eavesdroppers. However, as the sensing eavesdroppers are not aware of the transmitted
signal $\boldsymbol{x}(t)$, they cannot cancel the resulted interference from clutters. Let $\eta_{i,q}\in\mathbb{C}$ represent the complex coefficient from ISAC transmitter
$i\in\mathcal{M}_{t}$ to sensing eavesdropper $q\in\mathcal{Q}$
characterizing the influence of path-loss and target RCS.
Consequently, after applying receive beamformer $\boldsymbol{a}_{r}^{H}(\omega_{q})$, the received signal by sensing eavesdropper
$q\in\mathcal{Q}$ in the $t$-th symbol is given by
\begin{eqnarray}
\tilde{r}_{q}(t) & = & \boldsymbol{a}_{r}^{H}(\omega_{q})\Bigg(\sum_{i=1}^{M_{t}}\eta_{i,q}\boldsymbol{a}_{r}(\omega_{q})\boldsymbol{a}_{t}^{H}(\theta_{i})\boldsymbol{x}_{i}(t)\nonumber \\
 &  & +\underset{\mathrm{clutters}}{\underbrace{\sum_{i=1}^{M_{t}}\boldsymbol{U}_{i,q}^{H}\boldsymbol{x}_{i}(t)}}+\hat{\boldsymbol{n}}_{q}(t)\Bigg),\label{eq:sensing signal}
\end{eqnarray}
where $\hat{\boldsymbol{n}}_{q}(t)\in\mathbb{C}^{N\times1}$ is the
noise with $\hat{\boldsymbol{n}}_{q}(t)\sim\mathcal{CN}(\boldsymbol{0},\sigma_{s}^{2}\boldsymbol{I})$.
For convenience, we rewrite \eqref{eq:sensing signal} as

\begin{equation}
\tilde{r}_{q}(t)=\boldsymbol{a}_{q}^{H}\boldsymbol{x}(t)+\boldsymbol{u}_{q}^{H}\boldsymbol{x}(t)+\boldsymbol{a}_{r}^{H}(\omega_{q})\hat{\boldsymbol{n}}_{q}(t),
\end{equation}
where $\ensuremath{\boldsymbol{a}_{q}=[N\eta_{1,n}\boldsymbol{a}_{t}^{H}(\theta_{1}),\dots,N\eta_{M_{t},n}\boldsymbol{a}_{t}^{H}(\theta_{M_{t}})]^{H}}$
is the equivalent steering vector, and $\boldsymbol{u}_{q}=[\boldsymbol{a}_{r}^{H}(\omega_{q})\boldsymbol{U}_{1,n}^{H},\dots,\boldsymbol{a}_{r}^{H}(\omega_{q})\boldsymbol{U}_{M_{t},n}^{H}]^{H}$
is the equivalent clutter channel. Let $\hat{n}_{q}(t)=\boldsymbol{a}_{r}^{H}(\omega_{q})\hat{\boldsymbol{n}}_{q}(t)\sim\mathcal{CN}(0,N\sigma_{s}^{2})$
denote the equivalent noise. 

Let the null hypothesis $\mathcal{\tilde{H}}_{0}$ represent that there is no target in the sensing area, while the alternative
hypothesis $\mathcal{\tilde{H}}_{1}$ represent the existence of
the target. As a result, we formulate the hypothesis as
\begin{equation}
\begin{cases}
\begin{array}{l}
\mathcal{\tilde{H}}_{0}:\tilde{r}_{q}(t)=\boldsymbol{u}_{q}^{H}\boldsymbol{x}(t)+\hat{n}_{q}(t),\\
\mathcal{\tilde{H}}_{1}:\tilde{r}_{q}(t)=\boldsymbol{a}_{q}^{H}\boldsymbol{x}(t)+\boldsymbol{u}_{q}^{H}\boldsymbol{x}(t)+\hat{n}_{q}(t).
\end{array}\end{cases}\label{eq:hypo-2}
\end{equation}
Let $\ensuremath{\tilde{p}_{q}(\{\boldsymbol{w}_{k}\},\boldsymbol{S})}$
denote the eavesdropping probability at sensing eavesdropper $q\in\mathcal{Q}$
based on \eqref{eq:hypo-2}, which will be derived in Section III
shortly.

\section{Detection and Eavesdropping Probabilities }

Typically, the detection probability $p_{\mathrm{D}}(\{\boldsymbol{w}_{k}\},\boldsymbol{S})$
and the eavesdropping probability $\ensuremath{\tilde{p}_{q}(\{\boldsymbol{w}_{k}\},\boldsymbol{S})}$
are both intricate functions with respect to (w.r.t.) design variables
$\{\boldsymbol{w}_{k}\}$ and $\boldsymbol{S}$, posing tremendous
challenges in both modeling and optimization w.r.t. $\{\boldsymbol{w}_{k}\}$
and $\boldsymbol{S}$. In this section, we establish the relationships
between these probabilities and design variables $\{\boldsymbol{w}_{k}\}$
and $\boldsymbol{S}$. 

\subsection{Detection Probability at Sensing Receivers}

To begin with, we derive the detection probability at the sensing
receivers based on the multi-static sensing model in \eqref{eq:hypo}.
Here, we treat $\boldsymbol{\alpha}$ as the unknown but deterministic
parameters. Note that $\boldsymbol{\alpha}\neq\boldsymbol{0}$ represents
that there is target in the interested area. In this case, the detection
problem can be modeled as a linear model. Following the signal detection
in a noisy scenario \cite{kay2009fundamentals}, the hypothesis in
\eqref{eq:hypo} is equivalently reformulated as 
\begin{equation}
\begin{cases}
\begin{array}{cl}
\mathcal{H}_{0}: & \boldsymbol{\mu}^{T}\boldsymbol{\alpha}=0,\\
\mathcal{H}_{1}: & \boldsymbol{\mu}^{T}\boldsymbol{\alpha}\neq0,
\end{array}\end{cases}\label{eq:hypo-1}
\end{equation}
where $\boldsymbol{\mu}=\bm{1}\in\mathbb{C}^{M_{t}M_{r}\times1}$.
Given that parameters $\boldsymbol{\alpha}$ is unknown and based
on the expression of $\boldsymbol{\Phi}(t)$ in \eqref{eq:Y}, we
proceed to apply the GLRT detector \cite{kay2009fundamentals} based
on the hypothesis in \eqref{eq:hypo-1}. 
\begin{prop}
\textup{The GLRT for \eqref{eq:hypo-1} is given as 
\begin{equation}
\nu(\boldsymbol{Y})=\frac{|\boldsymbol{\mu}^{T}\hat{\boldsymbol{\alpha}}_{1}|^{2}}{\sigma_{s}^{2}\boldsymbol{\mu}^{T}(\boldsymbol{\Psi}^{H}\boldsymbol{\Psi})^{-1}\boldsymbol{\mu}},
\end{equation}
where $\hat{\boldsymbol{\alpha}}_{1}=(\boldsymbol{\Psi}^{H}\boldsymbol{\Psi})^{-1}\boldsymbol{\Psi}^{H}\boldsymbol{r}_{\mathrm{S}}$.
Let $\varXi$ denote the test threshold, the false alarm, and detection
probabilities at sensing receivers are respectively given by 
\begin{eqnarray}
p_{\mathrm{FA}} & = & \Upsilon{}_{\chi^{2}}(\varXi),\\
p_{\mathrm{D}}(\{\boldsymbol{w}_{k}\},\boldsymbol{S}) & = & \Upsilon_{\tilde{\chi}^{2}\big(\lambda(\{\boldsymbol{w}_{k}\},\boldsymbol{S})\big)}(\varXi),
\end{eqnarray}
where $\chi^{2}(\cdot)$ is the central Chi-square distribution, $\tilde{\chi}^{2}(\lambda)$
is the non-central Chi-square distribution with a non-centrality parameter
$\lambda(\{\boldsymbol{w}_{k}\},\boldsymbol{S})=\frac{|\boldsymbol{\mu}^{T}\boldsymbol{\alpha}|^{2}}{\sigma_{s}^{2}\boldsymbol{\mu}^{T}(\boldsymbol{\Psi}^{H}\boldsymbol{\Psi})^{-1}\boldsymbol{\mu}}$,
and $\Upsilon$ represents the right-tailed distribution. Here, $\lambda(\{\boldsymbol{w}_{k}\},\boldsymbol{S})$
is a function of $\boldsymbol{\Psi}$ and thus a function of optimization
variables $\{\boldsymbol{w}_{k}\}$ and $\boldsymbol{S}$.}
\end{prop}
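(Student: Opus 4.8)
The plan is to treat \eqref{eq:hypo-1} as a single linear-constraint hypothesis test within the classical complex Gaussian linear model \eqref{eq:receive} and to specialize the general GLRT machinery of \cite{kay2009fundamentals}. First I would write the likelihood of $\boldsymbol{r}_{\mathrm{S}}$: since $\boldsymbol{n}_s\sim\mathcal{CN}(\boldsymbol{0},\sigma_s^2\boldsymbol{I})$, it is the circularly symmetric complex Gaussian density with mean $\boldsymbol{\Psi}\boldsymbol{\alpha}$ and covariance $\sigma_s^2\boldsymbol{I}$. Maximizing its logarithm over an unconstrained $\boldsymbol{\alpha}$ is an ordinary least-squares problem whose minimizer is exactly the stated $\hat{\boldsymbol{\alpha}}_1=(\boldsymbol{\Psi}^H\boldsymbol{\Psi})^{-1}\boldsymbol{\Psi}^H\boldsymbol{r}_{\mathrm{S}}$; this requires $\boldsymbol{\Psi}^H\boldsymbol{\Psi}$ to be invertible, which I would justify from the full-column-rank structure of $\boldsymbol{\Psi}$ built from \eqref{eq:Y} when $T$ is large.

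Next I would form the GLRT as the ratio of the likelihood maximized under $\mathcal{H}_1$ to that maximized under $\mathcal{H}_0$, the latter being the same least-squares problem now subject to $\boldsymbol{\mu}^T\boldsymbol{\alpha}=0$. Taking logarithms, the test reduces (up to monotone transformations that do not affect the threshold $\varXi$) to the difference of residual norms $\|\boldsymbol{r}_{\mathrm{S}}-\boldsymbol{\Psi}\hat{\boldsymbol{\alpha}}_0\|^2-\|\boldsymbol{r}_{\mathrm{S}}-\boldsymbol{\Psi}\hat{\boldsymbol{\alpha}}_1\|^2$ normalized by $\sigma_s^2$. Solving the constrained least-squares problem for $\hat{\boldsymbol{\alpha}}_0$ by a Lagrange-multiplier argument and substituting back collapses this residual difference to the quadratic form $|\boldsymbol{\mu}^T\hat{\boldsymbol{\alpha}}_1|^2/\big(\boldsymbol{\mu}^T(\boldsymbol{\Psi}^H\boldsymbol{\Psi})^{-1}\boldsymbol{\mu}\big)$, which after the $\sigma_s^2$ normalization is precisely $\nu(\boldsymbol{Y})$. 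Alternatively, since $\hat{\boldsymbol{\alpha}}_1$ is a sufficient statistic, one may shortcut this by directly testing the mean of the scalar $\boldsymbol{\mu}^T\hat{\boldsymbol{\alpha}}_1$.

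For the distributions I would exploit that $\hat{\boldsymbol{\alpha}}_1$ is an affine map of the Gaussian vector $\boldsymbol{r}_{\mathrm{S}}$, hence $\hat{\boldsymbol{\alpha}}_1\sim\mathcal{CN}\big(\boldsymbol{\alpha},\sigma_s^2(\boldsymbol{\Psi}^H\boldsymbol{\Psi})^{-1}\big)$, so the scalar $\boldsymbol{\mu}^T\hat{\boldsymbol{\alpha}}_1$ is complex Gaussian with mean $\boldsymbol{\mu}^T\boldsymbol{\alpha}$ and variance $\sigma_s^2\boldsymbol{\mu}^T(\boldsymbol{\Psi}^H\boldsymbol{\Psi})^{-1}\boldsymbol{\mu}$. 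Under $\mathcal{H}_0$ this mean vanishes, so the normalized magnitude-square $\nu(\boldsymbol{Y})$ follows a central chi-square law and $p_{\mathrm{FA}}=\Upsilon_{\chi^2}(\varXi)$; under $\mathcal{H}_1$ the nonzero mean yields a non-central chi-square law whose non-centrality parameter is exactly $\lambda(\{\boldsymbol{w}_k\},\boldsymbol{S})=|\boldsymbol{\mu}^T\boldsymbol{\alpha}|^2/\big(\sigma_s^2\boldsymbol{\mu}^T(\boldsymbol{\Psi}^H\boldsymbol{\Psi})^{-1}\boldsymbol{\mu}\big)$, giving $p_{\mathrm{D}}=\Upsilon_{\tilde{\chi}^2(\lambda)}(\varXi)$. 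Because $\boldsymbol{\Psi}$, and hence $\lambda$, depends on $\boldsymbol{\Phi}(t)$ and therefore on $\{\boldsymbol{w}_k\}$ and $\boldsymbol{S}$, this establishes the claimed dependence.

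I expect the main obstacle to be the complex-valued bookkeeping rather than any conceptual difficulty. In particular, a single complex linear constraint corresponds to two real degrees of freedom, so care is needed to pin down the chi-square degrees of freedom and the factor relating the complex magnitude-square to its real/imaginary decomposition; these are exactly the places where a stray factor of two can creep into $\nu(\boldsymbol{Y})$ and $\lambda$. The constrained least-squares reduction in the second step is the most computation-heavy part, though it is routine once the Lagrange conditions are set up.
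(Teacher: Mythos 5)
Your proposal is correct and follows essentially the same route as the paper's Appendix A: the complex Gaussian linear model, the unconstrained least-squares MLE $\hat{\boldsymbol{\alpha}}_{1}$, the Lagrange-multiplier solution of the constrained MLE $\hat{\boldsymbol{\alpha}}_{0}$, substitution into the log-likelihood ratio to obtain $\nu(\boldsymbol{Y})$, and the Gaussianity of $\boldsymbol{\mu}^{T}\hat{\boldsymbol{\alpha}}_{1}$ to deduce the central and non-central chi-square laws. Your added attention to the invertibility of $\boldsymbol{\Psi}^{H}\boldsymbol{\Psi}$ and to the complex-versus-real degree-of-freedom bookkeeping is sound and, if anything, slightly more careful than the paper's own treatment.
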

\begin{proof}
Please refer to Appendix A.
\end{proof}

\subsection{Eavesdropping Probability at Sensing Eavesdroppers}

Next, we consider the eavesdropping probability at sensing eavesdropper
$q\in\mathcal{Q}$. Since we assume that the transmitted signal $\boldsymbol{x}(t)$
is confidential, the sensing eavesdropper does not know $\boldsymbol{x}(t)$.
As a result, the sensing eavesdropper can only perform energy detection
to decide whether the target exists, i.e., the detector is given as
$|\tilde{r}_{q}(t)|^{2}$ \cite{wang2023sensing}. Consequently, the
likelihood functions of $y_{q}(t)$ under $\mathcal{\tilde{H}}_{0}$
and $\mathcal{\tilde{H}}_{1}$ are respectively given by
\begin{equation}
\begin{array}{c}
p_{0}(\tilde{r}_{q}(t))=\frac{1}{\pi\zeta_{q}(\{\boldsymbol{w}_{k}\},\boldsymbol{S})}\mathrm{exp}(-\frac{|\tilde{r}_{q}(t)|^{2}}{\zeta_{q}(\{\boldsymbol{w}_{k}\},\boldsymbol{S})}),\\
p_{1}(\tilde{r}_{q}(t))=\frac{1}{\pi\beta_{q}(\{\boldsymbol{w}_{k}\},\boldsymbol{S})}\mathrm{exp}(-\frac{|\tilde{r}_{q}(t)|^{2}}{\beta_{q}(\{\boldsymbol{w}_{k}\},\boldsymbol{S})}),
\end{array}\label{eq:pdf2}
\end{equation}
where we define $\zeta_{q}(\{\boldsymbol{w}_{k}\},\boldsymbol{S})=\boldsymbol{u}_{q}^{H}\boldsymbol{R}\boldsymbol{u}_{q}+N\sigma^{2}$ and 
$\beta_{q}(\{\boldsymbol{w}_{k}\},\boldsymbol{S})=(\boldsymbol{a}_{q}+\boldsymbol{u}_{q})^{H}\boldsymbol{R}(\boldsymbol{a}_{q}+\boldsymbol{u}_{q})+N\sigma^{2}$, with $\boldsymbol{R}$ denoting the covariance matrix in \eqref{eq:cov}.
In accordance with the hypothesis in \eqref{eq:hypo-2}, Proposition
2 provides the maximum target detection probability via the Neyman-Pearson
criterion under the optimal threshold.
\begin{prop}
\textup{The maximum target detection probability under the optimal
threshold is given as 
\begin{equation}
\ensuremath{\tilde{p}_{q}(\{\boldsymbol{w}_{k}\},\boldsymbol{S})}=\big(\frac{\beta_{q}(\{\boldsymbol{w}_{k}\},\boldsymbol{S})}{\zeta_{n}(\{\boldsymbol{w}_{k}\},\boldsymbol{S})}\big)^{-\frac{\zeta_{n}(\{\boldsymbol{w}_{k}\},\boldsymbol{S})}{\beta_{q}(\{\boldsymbol{w}_{k}\},\boldsymbol{S})-\zeta_{n}(\{\boldsymbol{w}_{k}\},\boldsymbol{S})}}.
\end{equation}
}
\end{prop}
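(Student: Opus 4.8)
The plan is to show that the energy statistic $|\tilde{r}_q(t)|^2$ is exponentially distributed under each hypothesis, so that the Neyman--Pearson test collapses to a scalar energy threshold whose detection probability integrates in closed form. First I would observe that, since the information symbols $\hat{s}_k^{\mathrm{I}}(t)$ and the dedicated sensing signal $\boldsymbol{s}^{\mathrm{S}}(t)$ are zero-mean CSCG, the transmit signal $\boldsymbol{x}(t)$ is zero-mean CSCG with covariance $\boldsymbol{R}$ from \eqref{eq:cov}. Hence under $\tilde{\mathcal H}_0$ the observation $\tilde{r}_q(t)=\boldsymbol{u}_q^H\boldsymbol{x}(t)+\hat{n}_q(t)\sim\mathcal{CN}(0,\zeta_q)$ and under $\tilde{\mathcal H}_1$ the observation $\tilde{r}_q(t)=(\boldsymbol{a}_q+\boldsymbol{u}_q)^H\boldsymbol{x}(t)+\hat{n}_q(t)\sim\mathcal{CN}(0,\beta_q)$, which is exactly the content of the likelihood functions in \eqref{eq:pdf2}. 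Consequently, the nonnegative statistic $|\tilde{r}_q(t)|^2$ is exponentially distributed with mean $\zeta_q$ under $\tilde{\mathcal H}_0$ and with mean $\beta_q$ under $\tilde{\mathcal H}_1$.

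Next I would form the likelihood ratio $p_1/p_0=(\zeta_q/\beta_q)\exp\!\big(|\tilde{r}_q(t)|^2(\zeta_q^{-1}-\beta_q^{-1})\big)$. Because the target return strictly raises the received power so that $\beta_q>\zeta_q$, this ratio is strictly increasing in $|\tilde{r}_q(t)|^2$, and the optimal test is therefore monotone-equivalent to the energy test $|\tilde{r}_q(t)|^2\gtrless\gamma$. Imposing the optimal (unit) likelihood-ratio threshold, i.e.\ solving $p_1/p_0=1$ for the crossover point, then yields the closed-form energy threshold $\gamma=\frac{\zeta_q\beta_q}{\beta_q-\zeta_q}\ln\frac{\beta_q}{\zeta_q}$.

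Finally, integrating the exponential tail under $\tilde{\mathcal H}_1$ gives $\tilde{p}_q=\Pr(|\tilde{r}_q(t)|^2>\gamma\mid\tilde{\mathcal H}_1)=\exp(-\gamma/\beta_q)$, and substituting $\gamma$ collapses this to $(\beta_q/\zeta_q)^{-\zeta_q/(\beta_q-\zeta_q)}$, as claimed. I expect the main obstacle to be conceptual rather than computational: pinning down precisely what ``optimal threshold'' means, since a plain Neyman--Pearson test fixes a false-alarm level and would otherwise drive $\tilde{p}_q\to1$ as $\gamma\to0$. The resolution is to adopt the unit likelihood-ratio threshold---equivalently the minimum-error-probability Bayes rule under equal priors, which for two zero-mean exponentials also coincides with the threshold maximizing $\tilde{p}_q-p_{\mathrm{FA}}$---so that $\tilde{p}_q$ becomes a well-defined, threshold-free quantity that the beamforming design can subsequently constrain. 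With this interpretation fixed, the monotonicity of the ratio and the tail integration are entirely routine.
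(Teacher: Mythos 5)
Your proposal is correct and follows essentially the same route as the paper's Appendix B: a unit-threshold likelihood-ratio test on the exponentially distributed energy statistic $|\tilde{r}_q(t)|^2$, yielding the same closed-form threshold $\frac{\zeta_q\beta_q}{\beta_q-\zeta_q}\ln\big(\frac{\beta_q}{\zeta_q}\big)$, followed by the same exponential-tail integration under $\tilde{\mathcal{H}}_1$ to obtain $\big(\beta_q/\zeta_q\big)^{-\zeta_q/(\beta_q-\zeta_q)}$. Your remark pinning down what ``optimal threshold'' means (the unit likelihood-ratio point, i.e.\ the equal-prior minimum-error Bayes rule) is a useful clarification of a point the paper's proof leaves implicit, but it does not change the argument.
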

\begin{proof}
Please refer to Appendix B. 
\end{proof}
\begin{rem}
It is important to compare the detection probability in Proposition
1 versus the eavesdropping probability in Proposition 2. Notice that sensing receivers are aware of the transmitted signal
$\boldsymbol{x}(t)$, but sensing eavesdroppers do not have such information. As a result, sensing receivers are able to design
a detector via jointly exploiting the knowledge of $\boldsymbol{x}(t)$
in received signal $\boldsymbol{r}_{\mathrm{S}}$, but each
sensing eavesdropping can only perform target detection independently
via detecting the received signal power along the target direction. On the other hand, sensing receivers
have the capability to cancel clutters, while sensing eavesdroppers
do not possess this capability. As such, the inherent clutters result in interference towards sensing eavesdroppers only, which can
be utilized to jam sensing eavesdroppers for preserving the sensing data privacy, which will be discussed
in the subsequent section. 
\end{rem}

\section{Joint Transmit Beamforming Design for Secure Cell-free ISAC}

In this section, we first formulate a joint transmit beamforming problem
for the secure cell-free ISAC system. Our objective is to maximize
the detection probability subject to the minimum SINR constraints at the
CUs. Meanwhile, the formulation considers the maximum SNR constraints
at information eavesdroppers to ensure the confidentiality of information
transmission and the maximum detection probability constraints
at sensing eavesdroppers to maintain the sensing privacy. Following
this, we introduce a three-step approach to achieve a globally optimal
solution.

\subsection{Problem Formulation}

In this subsection, our goal is to maximize the legal detection probability,
as defined in Proposition 1, while ensuring that the CUs meet a minimum
SINR threshold. This is achieved through the optimization of both
the transmit information beamformers $\{\boldsymbol{w}_{k}\}$ and
the dedicated sensing covariance $\boldsymbol{S}$. Furthermore, the
transmission of each ISAC transmitter is constrained by a maximum
power budget $P$. For ease of presentation, let $\boldsymbol{A}_{i}\in\mathbb{C}^{N\times NM_{t}}$
denote an auxiliary binary matrix to extract $\hat{\boldsymbol{w}}_{i,k}$
from $\boldsymbol{w}_{k}$, which is composed of $M_{t}$ individual
$N\times N$ matrices arranged horizontally. Within this composite
matrix, only the $i$-th segment contains an identity matrix, while
all other segments consist of zeros, i.e., 
\begin{equation}
\boldsymbol{A}_{i}=[\boldsymbol{0},\dots,\underset{\textrm{the \ensuremath{i}-th matrx }}{\underbrace{\boldsymbol{I}}},\dots,\boldsymbol{0}].
\end{equation}
In this case, the transmit power constraint at ISAC transmitter $i\in\mathcal{M}_{t}$
is given by
\begin{equation}
\mathbb{E}\big(\boldsymbol{x}_{i}^{H}(t)\boldsymbol{x}_{i}(t)\big)=\mathrm{tr}\Bigg(\boldsymbol{A}_{i}\Big(\sum_{k=1}^{K}\boldsymbol{w}_{k}\boldsymbol{w}_{k}^{H}+\boldsymbol{S}\Big)\boldsymbol{A}_{i}^{H}\Bigg)\leq P.
\end{equation}
Let $\Gamma$, $\Omega$, and $\Lambda$ denote the required minimum
communication SINR threshold for the CUs, the maximum information
eavesdropping SNR threshold for the information eavesdroppers, and
the maximum eavesdropping sensing probability threshold for the sensing
eavesdroppers, respectively. Consequently, the joint transmit beamforming
problem for secure cell-free ISAC is formulated as\begin{subequations}

\begin{eqnarray}
\hspace{-0.8cm}(\mathrm{P1}): & \hspace{-0.2cm}\underset{\{\boldsymbol{w}_{k}\},\boldsymbol{S}}{\max} & \hspace{-0.2cm}\Upsilon_{\tilde{\chi}^{2}\big(\lambda(\{\boldsymbol{w}_{k}\},\boldsymbol{S})\big)}(\varXi)\nonumber \\
 & \hspace{-0.2cm}\mathrm{s.t.} & \hspace{-0.2cm}\gamma_{k}(\{\boldsymbol{w}_{k}\},\boldsymbol{S})\geq\Gamma,\forall k\in\mathcal{K},\label{eq:p1a}\\
 &  & \hspace{-0.2cm}\tilde{\gamma}_{l,k}(\{\boldsymbol{w}_{k}\},\boldsymbol{S})\leq\Omega,\forall l\in\mathcal{L},k\in\mathcal{K},\label{eq:p1b}\\
 &  & \hspace{-0.2cm}\tilde{p}_{q}(\{\boldsymbol{w}_{k}\},\boldsymbol{S})\leq\Lambda,\forall q\in\mathcal{Q},\label{eq:p1c}\\
 &  & \hspace{-0.8cm}\hspace{-0.3cm}\mathrm{tr}\Bigg(\boldsymbol{A}_{i}\Big(\sum_{k=1}^{K}\boldsymbol{w}_{k}\boldsymbol{w}_{k}^{H}+\boldsymbol{S}\Big)\boldsymbol{A}_{i}^{H}\Bigg)\leq P,\forall i\in\mathcal{M}_{t},\label{power}\\
 &  & \hspace{-0.2cm}\boldsymbol{S}\succeq\boldsymbol{0}.\label{eq:semipositive}
\end{eqnarray}
\end{subequations}

It is important to note that problem (P1) exhibits a high degree of
non-convexity, mainly stemming  from the non-convex and non-smooth
nature of the objective function and the constraints presented in
\eqref{eq:p1a}, \eqref{eq:p1b}, and \eqref{eq:p1c}. Therefore,
it is challenging to solve. In the subsequent subsection, we
present an effective approach to address these challenges and obtain
a globally optimal solution to problem (P1).

\subsection{Optimal Solution to Problem (P1)}

In this subsection, we present a three-step approach to obtain a globally
optimal solution to problem (P1). Initially, we simplify the detection
probability and eavesdropping probability to enhance problem tractability.
Subsequently, we reformulate the problem by adopting the SDR method \cite{luo2010semidefinite},
such that the convex SDR version can be optimally solved via off-the-shelf
toolboxes. Finally, we rigorously verify the tightness of the adopted
SDR to ensure the global optimality of the obtained solution.

\subsubsection{Problem Reformulation}

To begin with, we exploit the detection probability expression in
Proposition 1 and establish its equivalent form that is tractable
for optimization. Despite the non-smooth nature of the detection probability
expression in Proposition 1 w.r.t. $\lambda(\{\boldsymbol{w}_{k}\},\boldsymbol{S})$,
it is observed that an increase in the non-centrality parameter $\lambda(\{\boldsymbol{w}_{k}\},\boldsymbol{S})$
results in an increase in the right-tail probability for a non-central
chi-square distribution, given a specific threshold. This is because
that as $\lambda(\{\boldsymbol{w}_{k}\},\boldsymbol{S})$ increases,
the entire non-central chi-square distribution shifts to the right,
leading to an augmentation in the right-tail probability. Consequently,
we establish that maximizing the detection probability
$p_{\mathrm{D}}(\{\boldsymbol{w}_{k}\},\boldsymbol{S})$ is equivalent
to maximizing the non-centrality parameter $\lambda(\{\boldsymbol{w}_{k}\},\boldsymbol{S})$,
which is further equivalent to minimizing $\boldsymbol{\mu}^{T}(\boldsymbol{\Psi}^{H}\boldsymbol{\Psi})^{-1}\boldsymbol{\mu}$
in the denominator. 

Next, we consider the expression $\boldsymbol{\Psi}^{H}\boldsymbol{\Psi}$.
Recall that $\boldsymbol{\Psi}=[\boldsymbol{\Phi}^{T}(1),\dots,\boldsymbol{\Phi}^{T}(T)]^{T}$
and $\boldsymbol{\Phi}(t)=\mathrm{blkdiag}(\boldsymbol{\Phi}_{1}(t),\dots,\boldsymbol{\Phi}_{M_{r}}(t))$.
As such, we derive $\boldsymbol{\Psi}^{H}\boldsymbol{\Psi}$ as
\begin{equation}
\boldsymbol{\Psi}^{H}\boldsymbol{\Psi}=\sum_{t=1}^{T}\mathrm{blkdiag}(\boldsymbol{\Phi}_{1}^{H}(t)\boldsymbol{\Phi}_{1}(t),\dots,\boldsymbol{\Phi}_{M_{r}}^{H}(t)\boldsymbol{\Phi}_{M_{r}}(t)).\label{eq:eq1}
\end{equation}
Notice that the correlation of the received signal $\boldsymbol{\Phi}_{j}^{H}(t)\boldsymbol{\Phi}_{j}(t)$
at sensing receiver $j\in\mathcal{M}_{r}$ is given by
\begin{equation}
\boldsymbol{\Phi}_{j}^{H}(t)\boldsymbol{\Phi}_{j}(t)\hspace{-2pt}=\hspace{-2pt}\hspace{-2pt}\left[\hspace{-2pt}\begin{array}{ccc}
\boldsymbol{\phi}_{1,j}^{H}(t)\boldsymbol{\phi}_{1,j}(t) & \hspace{-2pt}\hspace{-2pt}\hspace{-2pt}\hspace{-2pt}\dots\hspace{-2pt}\hspace{-2pt}\hspace{-2pt}\hspace{-2pt} & \boldsymbol{\phi}_{1,j}^{H}(t)\boldsymbol{\phi}_{M_{t},j}(t)\\
\dots & \hspace{-2pt}\hspace{-2pt}\hspace{-2pt}\hspace{-2pt}\dots\hspace{-2pt}\hspace{-2pt}\hspace{-2pt}\hspace{-2pt} & \dots\\
\boldsymbol{\phi}_{M_{t},j}^{H}(t)\boldsymbol{\phi}_{1,j}(t) & \hspace{-2pt}\hspace{-2pt}\hspace{-2pt}\hspace{-2pt}\dots\hspace{-2pt}\hspace{-2pt}\hspace{-2pt}\hspace{-2pt} & \boldsymbol{\phi}_{M_{t},j}^{H}(t)\boldsymbol{\phi}_{M_{t},j}(t)
\end{array}\hspace{-2pt}\hspace{-2pt}\hspace{-2pt}\right].\label{eq:eq2}
\end{equation}
Also notice that for the received signals from ISAC transmitters $m\in\mathcal{M}_{t}$
and $n\in\mathcal{M}_{t}$ at sensing receiver $j\in\mathcal{M}_{r}$,
it follows that 
\begin{eqnarray}
\sum_{t=1}^{T}\boldsymbol{\phi}_{m,j}^{H}(t)\boldsymbol{\phi}_{n,j}(t) & = & N\sum_{t=1}^{T}\boldsymbol{x}_{m}^{H}(t)\boldsymbol{a}_{t}(\theta_{m})\boldsymbol{a}_{t}^{H}(\theta_{n})\boldsymbol{x}_{n}(t)\nonumber \\
 & \overset{(\mathrm{a})}{=} & TN\mathrm{tr}\big(\boldsymbol{a}_{t}(\theta_{m})\boldsymbol{a}_{t}^{H}(\theta_{n})\boldsymbol{R}_{m,n}\big),\label{eq:eq3}
\end{eqnarray}
where $\boldsymbol{R}_{m,n}=\mathrm{\mathbb{E}}\big(\boldsymbol{x}_{n}(t)\boldsymbol{x}_{m}^{H}(t)\big)$
and equality (a) holds due to the fact the statistical covariance
matrix $\boldsymbol{R}_{m,n}$ is equivalent to the sample covariance matrix 
when the number $T$ of symbols is sufficiently large \cite{liu2021integrated}.
As a result, by combining \eqref{eq:eq1}, \eqref{eq:eq2}, and \eqref{eq:eq3},
we rewrite $\boldsymbol{\Psi}^{H}\boldsymbol{\Psi}$ as 
\begin{eqnarray}
\boldsymbol{\Psi}^{H}\boldsymbol{\Psi} & = & \mathrm{blkdiag}(\underset{M_{r}}{\underbrace{\boldsymbol{R}_{\Phi},\dots,\boldsymbol{R}_{\Phi}}}),
\end{eqnarray}
where $\boldsymbol{R}_{\Phi}\in\mathbb{C}^{M_{t}\times M_{t}}$ and
\begin{equation}
\boldsymbol{R}_{\Phi}[m,n]=TN\mathrm{tr}\big(\boldsymbol{a}_{t}(\theta_{m})\boldsymbol{a}_{t}^{H}(\theta_{n})\boldsymbol{A}_{m}\boldsymbol{R}\boldsymbol{A}_{n}^{H}\big).\label{eq:R_y}
\end{equation}
As a result, maximizing the detection
probability expression in Proposition 1 is shown to be equivalent to minimizing the following tractable and equivalent form\footnote{It is insightful to discuss the covariance matrix $\boldsymbol{R}_{\Phi}$
in \eqref{eq:R_y}. The diagonal elements in $\boldsymbol{R}_{\Phi}$
represent the transmitted power directed towards the sensing target
at ISAC transmitters. Meanwhile, the non-diagonal elements in $\boldsymbol{R}_{\Phi}$
capture the covariance of transmitted signals from different ISAC
transmitters. Notably, the covariance of transmitted signals from
various ISAC transmitters should be properly designed together with
the power directed towards the target to enhance detection performance.}:
\begin{equation}
\boldsymbol{\mu}^{T}(\boldsymbol{\Psi}^{H}\boldsymbol{\Psi})^{-1}\boldsymbol{\mu}.\label{eq:detection prob}
\end{equation}

Next, we address the non-convex eavesdropping probability in constraints
\eqref{eq:p1c}, which is given in the following form:
\begin{equation}
\ensuremath{\tilde{p}_{q}(\{\boldsymbol{w}_{k}\},\boldsymbol{S})}=\big(\frac{\beta_{q}(\{\boldsymbol{w}_{k}\},\boldsymbol{S})}{\zeta_{n}(\{\boldsymbol{w}_{k}\},\boldsymbol{S})}\big)^{-\frac{1}{\frac{\beta_{q}(\{\boldsymbol{w}_{k}\},\boldsymbol{S})}{\zeta_{n}(\{\boldsymbol{w}_{k}\},\boldsymbol{S})}-1}}.
\end{equation}
Notice that function $f(x)=x^{-\frac{1}{x-1}}$ 
increases monotonically for $0<x<1$ and $x>1$. Therefore, by letting
$\Gamma_{\mathrm{d}}$ denote the solution to the equation
$f(x)=\Lambda$, we equivalently rewrite the sensing eavesdropping constraints
in \eqref{eq:p1c} as\footnote{For the eavesdropping probability, a larger denominator $\zeta_{n}(\{\boldsymbol{w}_{k}\},\boldsymbol{S})$
at \eqref{eq:sensing eavesdropper} contributes to a decrease in the
sensing eavesdropping detection probability. This denominator $\zeta_{n}(\{\boldsymbol{w}_{k}\},\boldsymbol{S})$
characterizes the received clutter plus noise power at the sensing eavesdroppers.
Fundamentally, this indicates an opportunity to exploit clutter information
to jam the sensing eavesdroppers. Conversely, a smaller numerator
$\beta_{q}(\{\boldsymbol{w}_{k}\},\boldsymbol{S})$ represents the
ability to decrease the received power at the sensing receivers when
a target exists. This intentional reduction in received power aims
to prevent the sensing eavesdroppers from detecting the presence of
a target. } 
\begin{equation}
\frac{\beta_{q}(\{\boldsymbol{w}_{k}\},\boldsymbol{S})}{\zeta_{n}(\{\boldsymbol{w}_{k}\},\boldsymbol{S})}\leq\Gamma_{\mathrm{d}},\forall q\in\mathcal{Q}.\label{eq:sensing eavesdropper}
\end{equation}

Finally, we reformulate the SINR constraints in \eqref{eq:p1a} as
\begin{equation}
\boldsymbol{h}_{k}^{H}\Big(\boldsymbol{w}_{k}\boldsymbol{w}_{k}^{H}-\Gamma(\sum_{i\neq k}^{K}\boldsymbol{w}_{i}\boldsymbol{w}_{i}^{H}+\boldsymbol{S})\Big)\boldsymbol{h}_{k}\geq\Gamma\sigma^{2},\forall k\in\mathcal{K}.\label{eq:SINR1}
\end{equation}
Similarly, we equivalently rewrite the information eavesdropping SNR
constraints in \eqref{eq:p1b} as 
\begin{equation}
\boldsymbol{g}_{l}^{H}\boldsymbol{w}_{k}\boldsymbol{w}_{k}^{H}\boldsymbol{g}_{l}\leq\Omega\sigma^{2},\forall l\in\mathcal{L},\forall k\in\mathcal{K}.\label{eq:SNR1}
\end{equation}

By combining \eqref{eq:detection prob}, \eqref{eq:sensing eavesdropper},
\eqref{eq:SINR1}, and \eqref{eq:SNR1}, problem (P1) is equivalently
reformulated as \vspace{-0.2cm}
\begin{eqnarray*}
(\mathrm{P1.1}): & \underset{\{\boldsymbol{w}_{k}\},\boldsymbol{S}}{\min} & \boldsymbol{\mu}^{T}(\boldsymbol{\Psi}^{H}\boldsymbol{\Psi})^{-1}\boldsymbol{\mu}\\
 & \mathrm{s.t.} & \textrm{\eqref{power}, \eqref{eq:semipositive}, \eqref{eq:sensing eavesdropper}, \eqref{eq:SINR1}, and \eqref{eq:SNR1}.}
\end{eqnarray*}
It is noting that problem (P1.1) is still a non-convex problem due
to the non-convexity of the objective function and the constraints
in \eqref{eq:sensing eavesdropper} and \eqref{eq:SINR1}.

\subsubsection{SDR-Based Solution to Problem (P1.1)}

In the following, we present a SDR-based approach to address problem
(P1.1). By introducing auxiliary variables $\boldsymbol{W}_{k}=\boldsymbol{w}_{k}\boldsymbol{w}_{k}^{H}\succeq\boldsymbol{0}$
with $\textrm{rank}(\boldsymbol{W}_{k})\leq1$, $\forall k\in\mathcal{K}$,
we equivalently reformulate the SINR constraints in \eqref{eq:SINR1}
as 
\begin{equation}
\boldsymbol{h}_{k}^{H}\big(\boldsymbol{W}_{k}-\Gamma(\sum_{i\neq k}^{K}\boldsymbol{W}_{i}+\boldsymbol{S})\big)\boldsymbol{h}_{k}\geq\Gamma\sigma^{2},\forall k\in\mathcal{K}.\label{eq:SINR2}
\end{equation}
Similarly, we equivalently reformulate \eqref{eq:SNR1} as 
\begin{equation}
\boldsymbol{g}_{l}^{H}\boldsymbol{W}_{k}\boldsymbol{g}_{l}\leq\Omega\sigma^{2},\forall l\in\mathcal{L},\forall k\in\mathcal{K}.\label{eq:SNR2}
\end{equation}

Subsequently, we aim to address the sensing eavesdropping constraints
in \eqref{eq:sensing eavesdropper}. Based on the expression of $\beta_{q}(\{\boldsymbol{w}_{k}\},\boldsymbol{S})$
and $\zeta_{n}(\{\boldsymbol{w}_{k}\},\boldsymbol{S})$ in Proposition
2, we equivalently reformulate \eqref{eq:sensing eavesdropper}
as 
\begin{equation}
\boldsymbol{u}_{q}^{H}\boldsymbol{R}\boldsymbol{u}_{q}+N\sigma^{2}\leq\Gamma_{\mathrm{d}}\big((\boldsymbol{a}_{q}+\boldsymbol{u}_{q})^{H}\boldsymbol{R}(\boldsymbol{a}_{q}+\boldsymbol{u}_{q})+N\sigma^{2}\big),\forall q\in\mathcal{Q}.\label{eq:sensing eavesdropping2}
\end{equation}
Furthermore, the transmit power constraints at ISAC transmitters in \eqref{power}
are equivalently reformulated as 
\begin{equation}
\mathrm{tr}\Bigg(\boldsymbol{A}_{i}\Big(\sum_{k=1}^{K}\boldsymbol{W}_{k}+\boldsymbol{S}\Big)\boldsymbol{A}_{i}^{H}\Bigg)\leq P,\forall i\in\mathcal{M}_{t}.\label{eq:power2-1}
\end{equation}
Moreover, recall that $\boldsymbol{\Psi}^{H}\boldsymbol{\Psi}$ is
affine w.r.t. $\boldsymbol{R} = \sum_{k=1}^{K}\boldsymbol{W}_{k}+\boldsymbol{S}$ and thus equivalently affine w.r.t. $\{\boldsymbol{W}_{k}\}$ and $\boldsymbol{S}$.
Consequently, we equivalently reformulate problem (P1.1) as \begin{subequations}
\begin{eqnarray}
\hspace{-1cm}(\mathrm{P1.2}): & \underset{\{\boldsymbol{W}_{k}\},\boldsymbol{S}}{\min} & \boldsymbol{\mu}^{T}(\boldsymbol{\Psi}^{H}\boldsymbol{\Psi})^{-1}\boldsymbol{\mu}\nonumber \\
 & \mathrm{s.t.} & \textrm{\eqref{eq:SINR2}, \eqref{eq:SNR2}, \eqref{eq:sensing eavesdropping2}, \eqref{eq:power2-1}, and \eqref{eq:semipositive}. }\nonumber \\
 &  & \boldsymbol{W}_{k}\succeq\boldsymbol{0},\textrm{rank}(\boldsymbol{W}_{k})\leq1,\forall k\in\mathcal{K}.\label{eq:rank}
\end{eqnarray}
\end{subequations} Notice that problem (P1.2) is still non-convex
due to the non-convex rank constraints in \eqref{eq:rank}. Then we
drop the rank constraints in \eqref{eq:rank} and obtain the SDR version
of problem (P1.2) as
\begin{eqnarray*}
(\textrm{SDR-1.2}): & \underset{\{\boldsymbol{W}_{k}\},\boldsymbol{S}}{\min} & \boldsymbol{\mu}^{T}(\boldsymbol{\Psi}^{H}\boldsymbol{\Psi})^{-1}\boldsymbol{\mu}\\
 & \mathrm{s.t.} & \textrm{\eqref{eq:SINR2}, \eqref{eq:SNR2}, \eqref{eq:sensing eavesdropping2}, \eqref{eq:power2-1}, and \eqref{eq:semipositive}. }
\end{eqnarray*}
In this context, $\textrm{\eqref{eq:SINR2}, \eqref{eq:SNR2}, \eqref{eq:sensing eavesdropping2}, \eqref{eq:power2-1}, and \eqref{eq:semipositive}}$
are all affine constraints w.r.t. $\{\boldsymbol{W}_{k}\}$ and $\boldsymbol{S}$.
Also notice that the inverse of a semidefinite matrix is a strict
convex function \cite{horn2012matrix}. As a result, the objective
function $\boldsymbol{\mu}^{T}(\boldsymbol{\Psi}^{H}\boldsymbol{\Psi})^{-1}\boldsymbol{\mu}$
is a strict convex function w.r.t. $\{\boldsymbol{W}_{k}\}$ and $\boldsymbol{S}$. Therefore, problem (SDR-1.2) is a convex optimization problem that is solvable
via an off-the-shelf toolbox, such as CVX \cite{grant2014cvx}. Let
$\{\boldsymbol{W}_{k}^{*}\}$ and $\boldsymbol{S}^{*}$ denote the
optimal solution to problem (SDR-1.2) and $\boldsymbol{R}^{*}=\sum_{k=1}^{K}\boldsymbol{W}_{k}^{*}+\boldsymbol{S}^{*}$. 

\subsubsection{Tightness of SDR}
In general, the optimal solution $\{\boldsymbol{W}_{k}^{*}\}$ and
$\boldsymbol{S}^{*}$ may not satisfy the rank constraints in \eqref{eq:rank} and
thus are not necessarily optimal to the original problem (P1.2). In
the following, we rigorously show the tightness of SDR in the following.

\begin{prop}
\textup{Based on the obtained solution $\{\boldsymbol{W}_{k}^{*}\}$
and $\boldsymbol{S}^{*}$, we can always construct an alternative solution as
\begin{eqnarray}
 &  & \boldsymbol{W}_{k}^{\mathrm{opt}}=\frac{\boldsymbol{W}_{k}^{*}\boldsymbol{h}_{k}\boldsymbol{h}_{k}^{H}\boldsymbol{W}_{k}^{*}}{\boldsymbol{h}_{k}^{H}\boldsymbol{W}_{k}^{*}\boldsymbol{h}_{k}},\forall k\in\mathcal{K},\\
 &  & \boldsymbol{S}^{\mathrm{opt}}=\boldsymbol{R}^{*}-\sum_{k=1}^{K}\boldsymbol{W}_{k}^{\mathrm{opt}},
\end{eqnarray}
with $\textrm{rank}(\boldsymbol{W}_{k}^{\mathrm{opt}})=1$,
$\forall k\in\mathcal{K}$.} 
\textup{The constructed solution $\{\boldsymbol{W}_{k}^{\mathrm{opt}}\}$
and $\boldsymbol{S}^{\mathrm{opt}}$ achieves the same optimal objective
function achieved by solution $\{\boldsymbol{W}_{k}^{*}\}$ and $\boldsymbol{S}^{*}$
while satisfying the rank constraints in \eqref{eq:rank}. Therefore,
the solution of $\{\boldsymbol{W}_{k}^{\mathrm{opt}}\}$ and $\boldsymbol{S}^{\mathrm{opt}}$
is globally optimal to problem (P1.2).}
\end{prop}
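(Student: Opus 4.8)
The plan is to verify that the candidate $\{\boldsymbol{W}_{k}^{\mathrm{opt}}\},\boldsymbol{S}^{\mathrm{opt}}$ is (i) well defined, positive semidefinite, and rank one, (ii) feasible for (P1.2), and (iii) attains the same objective value as the relaxed optimum. Global optimality then follows because (SDR-1.2) is obtained from (P1.2) by merely dropping the rank constraints in \eqref{eq:rank}, so its optimal (minimum) value lower-bounds that of (P1.2); a feasible rank-one point of (P1.2) meeting this bound must be globally optimal.

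First I would record the single matrix inequality that drives the whole argument. Since the SINR constraint \eqref{eq:SINR2} forces $\boldsymbol{h}_{k}^{H}\boldsymbol{W}_{k}^{*}\boldsymbol{h}_{k}\geq\Gamma\sigma^{2}>0$, the denominator in $\boldsymbol{W}_{k}^{\mathrm{opt}}$ is strictly positive, so the construction is well defined, and $\boldsymbol{W}_{k}^{\mathrm{opt}}=(\boldsymbol{W}_{k}^{*}\boldsymbol{h}_{k})(\boldsymbol{W}_{k}^{*}\boldsymbol{h}_{k})^{H}/(\boldsymbol{h}_{k}^{H}\boldsymbol{W}_{k}^{*}\boldsymbol{h}_{k})$ is a positively scaled outer product, hence $\boldsymbol{W}_{k}^{\mathrm{opt}}\succeq\boldsymbol{0}$ with $\textrm{rank}(\boldsymbol{W}_{k}^{\mathrm{opt}})=1$. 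Writing $\boldsymbol{b}_{k}=(\boldsymbol{W}_{k}^{*})^{1/2}\boldsymbol{h}_{k}$, I would express $\boldsymbol{W}_{k}^{\mathrm{opt}}=(\boldsymbol{W}_{k}^{*})^{1/2}\big(\boldsymbol{b}_{k}\boldsymbol{b}_{k}^{H}/\|\boldsymbol{b}_{k}\|^{2}\big)(\boldsymbol{W}_{k}^{*})^{1/2}$; the inner factor is an orthogonal projection and hence $\preceq\boldsymbol{I}$, which immediately yields the key inequality $\boldsymbol{W}_{k}^{\mathrm{opt}}\preceq\boldsymbol{W}_{k}^{*}$ for every $k\in\mathcal{K}$. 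A direct computation also gives the signal-power preservation identity $\boldsymbol{h}_{k}^{H}\boldsymbol{W}_{k}^{\mathrm{opt}}\boldsymbol{h}_{k}=\boldsymbol{h}_{k}^{H}\boldsymbol{W}_{k}^{*}\boldsymbol{h}_{k}$.

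Next I would check feasibility using two structural facts: that the covariance is preserved, $\boldsymbol{R}^{\mathrm{opt}}:=\sum_{k=1}^{K}\boldsymbol{W}_{k}^{\mathrm{opt}}+\boldsymbol{S}^{\mathrm{opt}}=\boldsymbol{R}^{*}$ by the very definition of $\boldsymbol{S}^{\mathrm{opt}}$; and that the objective $\boldsymbol{\mu}^{T}(\boldsymbol{\Psi}^{H}\boldsymbol{\Psi})^{-1}\boldsymbol{\mu}$ together with the power constraints \eqref{eq:power2-1} and the sensing-eavesdropping constraints \eqref{eq:sensing eavesdropping2} depend on the variables only through $\boldsymbol{R}$ (recall \eqref{eq:R_y}). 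These constraints and the objective value are therefore inherited unchanged from the feasible relaxed optimum. For $\boldsymbol{S}^{\mathrm{opt}}\succeq\boldsymbol{0}$ I would write $\boldsymbol{S}^{\mathrm{opt}}=\boldsymbol{S}^{*}+\sum_{k=1}^{K}(\boldsymbol{W}_{k}^{*}-\boldsymbol{W}_{k}^{\mathrm{opt}})$ and invoke the key inequality term by term together with $\boldsymbol{S}^{*}\succeq\boldsymbol{0}$. For the information-eavesdropping constraints \eqref{eq:SNR2}, the key inequality directly gives $\boldsymbol{g}_{l}^{H}\boldsymbol{W}_{k}^{\mathrm{opt}}\boldsymbol{g}_{l}\leq\boldsymbol{g}_{l}^{H}\boldsymbol{W}_{k}^{*}\boldsymbol{g}_{l}\leq\Omega\sigma^{2}$. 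Finally, for the SINR constraints \eqref{eq:SINR2} I would use the identity $\sum_{i\neq k}^{K}\boldsymbol{W}_{i}^{\mathrm{opt}}+\boldsymbol{S}^{\mathrm{opt}}=\boldsymbol{R}^{*}-\boldsymbol{W}_{k}^{\mathrm{opt}}$, so that the constraint's left-hand side equals $\boldsymbol{h}_{k}^{H}\boldsymbol{W}_{k}^{\mathrm{opt}}\boldsymbol{h}_{k}-\Gamma\boldsymbol{h}_{k}^{H}(\boldsymbol{R}^{*}-\boldsymbol{W}_{k}^{\mathrm{opt}})\boldsymbol{h}_{k}$; substituting the signal-power identity collapses this exactly to the corresponding left-hand side for $\{\boldsymbol{W}_{k}^{*}\},\boldsymbol{S}^{*}$, which is $\geq\Gamma\sigma^{2}$ by feasibility of the relaxed optimum.

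Combining these steps, the constructed point is feasible for (P1.2), satisfies the rank constraints, and (since $\boldsymbol{R}^{\mathrm{opt}}=\boldsymbol{R}^{*}$) achieves the same objective value as the optimum of (SDR-1.2); by the relaxation argument above it is therefore globally optimal for (P1.2). I expect the only genuinely delicate step to be establishing $\boldsymbol{W}_{k}^{\mathrm{opt}}\preceq\boldsymbol{W}_{k}^{*}$ cleanly via the square-root/projection argument — once that inequality and the preserved-$\boldsymbol{R}$ and preserved-signal-power identities are in hand, each constraint verification becomes a one-line consequence.
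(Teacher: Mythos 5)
Your proof is correct and follows essentially the same route as the paper's Appendix C: both arguments verify that the construction preserves the overall covariance $\boldsymbol{R}^{*}$ (hence the objective, the power constraints, and the sensing-eavesdropping constraints) and preserves the signal power $\boldsymbol{h}_{k}^{H}\boldsymbol{W}_{k}\boldsymbol{h}_{k}$ (hence the SINR constraints), and both rest on the key L\"owner inequality $\boldsymbol{W}_{k}^{\mathrm{opt}}\preceq\boldsymbol{W}_{k}^{*}$, which the paper obtains via the Cauchy--Schwarz inequality $(\boldsymbol{v}^{H}\boldsymbol{W}_{k}^{*}\boldsymbol{v})(\boldsymbol{h}_{k}^{H}\boldsymbol{W}_{k}^{*}\boldsymbol{h}_{k})\geq|\boldsymbol{v}^{H}\boldsymbol{W}_{k}^{*}\boldsymbol{h}_{k}|^{2}$ --- exactly equivalent to your square-root/projection bound. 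If anything, your write-up is more complete than the paper's, which leaves implicit the strict positivity of the denominator $\boldsymbol{h}_{k}^{H}\boldsymbol{W}_{k}^{*}\boldsymbol{h}_{k}$ and the verification $\boldsymbol{S}^{\mathrm{opt}}\succeq\boldsymbol{0}$ that you spell out explicitly.
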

\begin{proof}
Please refer to Appendix C.
\end{proof}
Based on Proposition 3, the optimal solution to the joint secure transmit
beamforming problem (P1) is finally obtained.

\section{Alternative Secure ISAC Designs}

In this section, we propose two alternative transmit beamforming designs
for secure ISAC. The first alternative design focuses on maximizing
the sensing SNR rather than directly maximizing detection probability.
The second design involves coordinated transmit beamforming in a multi-cell
ISAC configuration rather than cell-free transmission.

\subsection{Sensing SNR Maximization}

In this subsection, we consider the sensing SNR maximization problem
in the secure cell-free ISAC scenario, which is a widely adopted intuitive design objective  \cite{demirhan2023cell}. Following the discussion in
\cite{demirhan2023cell}, the sensing SNR is defined as the ratio
of the sum power at all sensing receivers to the sum noise power.
The received power from transmitter $i\in M_{t}$ to the target direction
is given as $\boldsymbol{a}_{t}^{H}(\theta_{i})\boldsymbol{A}_{i}\boldsymbol{R}\boldsymbol{A}_{i}^{H}\boldsymbol{a}_{t}(\theta_{i})$.
Consequently, the received sensing SNR is given as \vspace{-0.2cm}
\begin{equation}
\gamma_{\mathrm{S}}(\{\boldsymbol{w}_{k}\},\boldsymbol{S})=\frac{\sum_{i=1}^{M_{t}}\boldsymbol{a}_{t}^{H}(\theta_{i})\boldsymbol{A}_{i}\boldsymbol{R}\boldsymbol{A}_{i}^{H}\boldsymbol{a}_{t}(\theta_{i})}{M_{t}\sigma_{s}^{2}},\label{eq:sensing SNR}
\end{equation}
where ${\boldsymbol{R}}=\mathbb{E}\big(\boldsymbol{x}(t)\boldsymbol{x}^{H}(t)\big)$
is transmit covariance of the $M_{t}$ ISAC transmitters. Consequently,
we formulate a sensing SNR maximization problem within our secure
cell-free ISAC system as follows, by replacing the objective function
in (P1) as the sensing SNR in \eqref{eq:sensing SNR}.\begin{subequations}

\vspace{-0.2cm}
\begin{eqnarray*}
(\mathrm{P2}): & \underset{\{\boldsymbol{w}_{k}\},\boldsymbol{S}}{\max} & \frac{\sum_{i=1}^{M_{t}}\boldsymbol{a}_{t}^{H}(\theta_{i})\boldsymbol{A}_{i}\boldsymbol{R}\boldsymbol{A}_{i}^{H}\boldsymbol{a}_{t}(\theta_{i})}{M_{t}\sigma_{s}^{2}}\\
 & \mathrm{s.t.} & \textrm{\textrm{\eqref{eq:p1a}, \eqref{eq:p1b}, \eqref{eq:p1c}, \eqref{power}}, and \eqref{eq:semipositive}.}
\end{eqnarray*}
\end{subequations}

Based on the similar problem reformulation and SDR technique for problem
(P1), we equivalently reformulate problem (P2) as \begin{subequations}

\vspace{-0.2cm}
\begin{eqnarray}
\hspace{-1cm}(\mathrm{P2.1}): & \underset{\{\boldsymbol{W}_{k}\},\boldsymbol{S}}{\min} & \sum_{i=1}^{M_{t}}\boldsymbol{a}_{t}^{H}(\theta_{i})\boldsymbol{A}_{i}\boldsymbol{R}\boldsymbol{A}_{i}^{H}\boldsymbol{a}_{t}(\theta_{i})\nonumber \\
 & \mathrm{s.t.} & \textrm{\eqref{eq:SINR2}, \eqref{eq:SNR2}, \eqref{eq:sensing eavesdropping2}, \eqref{eq:power2-1}, and \eqref{eq:semipositive}. }\nonumber \\
 &  & \boldsymbol{W}_{k}\succeq\boldsymbol{0},\textrm{rank}(\boldsymbol{W}_{k})\leq1,\forall k\in\mathcal{K}.\label{eq:rank-1}
\end{eqnarray}
\end{subequations} Then, we directly drop the rank constraints in
\eqref{eq:rank-1} and obtain the SDR version of problem (P2.1) as\vspace{-0.2cm}
\begin{eqnarray*}
(\textrm{SDR-2.1}): & \underset{\{\boldsymbol{W}_{k}\},\boldsymbol{S}}{\min} & \sum_{i=1}^{M_{t}}\boldsymbol{a}_{t}^{H}(\theta_{i})\boldsymbol{A}_{i}\boldsymbol{R}\boldsymbol{A}_{i}^{H}\boldsymbol{a}_{t}(\theta_{i})\\
 & \mathrm{s.t.} & \textrm{\eqref{eq:SINR2}, \eqref{eq:SNR2}, \eqref{eq:sensing eavesdropping2}, \eqref{eq:power2-1}, and \eqref{eq:semipositive}. }
\end{eqnarray*}
Due to the fact that the objective function is affine
w.r.t. $\{\boldsymbol{W}_{k}\}$ and $\boldsymbol{S}$, problem (SDR-2.1)
is a convex optimization problem that is solvable via an off-the-shelf
toolbox, such as CVX \cite{grant2014cvx}. It can be similarly verified
that the construction in Proposition 3 is also applicable here to
obtain an optimal solution to problem (P2.1). Thus, problem (P2.1)
can also be optimally solved.

\subsection{Coordinated Transmit Beamforming Design}

In this subsection, we propose another alternative design based on
coordinated beamforming. In the coordinated beamforming scenario,
each CU is served only by one ISAC transmitter and the signals from
different ISAC transmitters are independent \cite{cheng2023optimal}.
Let $\mathcal{K}_{i}$ denote the set of CUs associated to ISAC transmitter
$i\in\mathcal{M}_{t}$. Equivalently, the transmit beamforming vector
at ISAC transmitter $i\in\mathcal{M}_{t}$ for CU $k\notin\mathcal{K}_{i}$
is forced to be $\boldsymbol{0}$, i.e., \vspace{-0.2cm}
\begin{equation}
\hat{\boldsymbol{w}}_{i,k}=\boldsymbol{0},\forall i\in\mathcal{M}_{t},k\notin\mathcal{K}_{i}.\label{eq:coordinated}
\end{equation}
Then, we exploit the binary auxiliary matrix $\boldsymbol{A}_{i}$
to transform \eqref{eq:coordinated} as
\begin{equation}
\boldsymbol{A}_{i}\boldsymbol{w}_{k}=\boldsymbol{0},\forall i\in\mathcal{M}_{t},k\notin\mathcal{K}_{i}.\label{eq:coordinated2}
\end{equation}
 As a result, the detection probability maximization problem for coordinated transmit
beamforming in the secure multi-cell ISAC system is formulated as
\begin{subequations}
\begin{eqnarray*}
(\mathrm{P3}): & \underset{\{\boldsymbol{w}_{k}\},\boldsymbol{S}}{\max} & \Upsilon_{\tilde{\chi}^{2}\big(\lambda(\{\boldsymbol{w}_{k}\},\boldsymbol{S})\big)}(\varPsi)\\
 & \mathrm{s.t.} & \textrm{\textrm{\eqref{eq:coordinated2}, \eqref{eq:p1b}, \eqref{eq:p1c}, \eqref{power}}, and \eqref{eq:semipositive}.}
\end{eqnarray*}
\end{subequations}

In the following, we present the optimal solution to problem (P3).
To begin with, we utilize the same problem reformulation and SDR technique
as in Section IV for problem (P3). Based on the constraints in \eqref{eq:coordinated2},
the auxiliary variables $\boldsymbol{W}_{k}\in\mathbb{C}^{NM_{t}\times NM_{t}}$
should maintain the structure 
\begin{equation}
\boldsymbol{W}_{k}=\mathrm{blkdiag}(\boldsymbol{0},\dots,\underset{\textrm{The \ensuremath{i}-th matrix }}{\underbrace{\hat{\boldsymbol{w}}_{i,k}\hat{\boldsymbol{w}}_{i,k}^{H}}},\dots,\boldsymbol{0}),k\in\mathcal{K}_{i}.\label{eq:str}
\end{equation}
We introduce new auxiliary variables $\hat{\boldsymbol{W}}_{k}\in\mathbb{C}^{N\times N}$
and formulate the structure in \eqref{eq:str} as 
\begin{equation}
\boldsymbol{W}_{k}=\boldsymbol{J}_{i}\otimes\hat{\boldsymbol{W}}_{k},\forall k\in\mathcal{K},\label{eq:coordinated3}
\end{equation}
where $\boldsymbol{J}_{i}=\mathrm{diag}(0,\dots,\underset{\textrm{The \ensuremath{i}-th element}}{\underbrace{1}},\dots,0)$,
$k\in\mathcal{K}_{i}$. As a result, problem (P3) is reformulated
as \begin{subequations}
\begin{eqnarray}
\hspace{-0.5cm}(\mathrm{P3.1}): & \hspace{-0.5cm}\underset{\{\boldsymbol{W}_{i,k},\boldsymbol{W}_{k}\},\boldsymbol{S}}{\min} & \boldsymbol{\mu}^{T}(\boldsymbol{\Psi}^{H}\boldsymbol{\Psi})^{-1}\boldsymbol{\mu}\nonumber \\
 & \mathrm{s.t.} & \textrm{\eqref{eq:coordinated3}, \eqref{eq:SINR2}, \eqref{eq:SNR2}, \eqref{eq:sensing eavesdropping2}, \eqref{eq:power2-1}, and \eqref{eq:semipositive} }\nonumber \\
 &  & \boldsymbol{W}_{k}\succeq\boldsymbol{0},\textrm{rank}(\boldsymbol{W}_{k})\leq1,\forall k\in\mathcal{K}.\label{eq:rank-1-1}
\end{eqnarray}
\end{subequations}Then, we drop the rank constraints in \eqref{eq:rank-1-1}
to obtain the SDR version of problem (P3.1):
\begin{eqnarray*}
(\textrm{SDR-3.1}): & \hspace{-0.5cm}\underset{\{\boldsymbol{W}_{k}\},\boldsymbol{S},\{\boldsymbol{W}_{i,k}\}}{\min} & \boldsymbol{\mu}^{T}(\boldsymbol{Y}^{H}\boldsymbol{Y})^{-1}\boldsymbol{\mu}\\
 & \mathrm{s.t.} & \hspace{-0.8cm}\textrm{\eqref{eq:coordinated3}, \eqref{eq:SINR2}, \eqref{eq:SNR2}, \eqref{eq:sensing eavesdropping2}, \eqref{eq:power2-1}, and \eqref{eq:semipositive}. }
\end{eqnarray*}
Problem (SDR-3.1) is a convex optimization problem that is solvable
via CVX. Let $\{\boldsymbol{W}_{k}^{\star}\}$, $\boldsymbol{S}^{\star}$,
and $\{\hat{\boldsymbol{W}}_{k}^{\star}\}$ denote the obtained solution
and $\boldsymbol{R}^{\star}=\sum_{k=1}^{K}\boldsymbol{W}_{k}^{\star}+\boldsymbol{S}^{\star}$.
In the following, we rigorously show the tightness of SDR. 


\begin{prop}
\textup{Based on the obtained solution }$\{\boldsymbol{W}_{k}^{\star}\}$,
$\boldsymbol{S}^{\star}$, \textup{and} $\{\hat{\boldsymbol{W}}_{k}^{\star}\}$\textup{,
we construct 
\begin{eqnarray}
 &  & \hat{\boldsymbol{W}}_{k}^{\mathrm{opt}}=\frac{\hat{\boldsymbol{W}}_{k}^{\star}\boldsymbol{h}_{i,k}\boldsymbol{h}_{i,k}^{H}\hat{\boldsymbol{W}}_{k}^{\star}}{\boldsymbol{h}_{i,k}^{H}\hat{\boldsymbol{W}}_{k}^{\star}\boldsymbol{h}_{i,k}},\\
 &  & \bar{\boldsymbol{W}}_{k}^{\mathrm{opt}}=\boldsymbol{J}_{i}\otimes\hat{\boldsymbol{W}}_{k}^{\mathrm{opt}},\\
 &  & \bar{\boldsymbol{S}}^{\mathrm{opt}}=\boldsymbol{R}^{\star}-\sum_{k=1}^{K}\bar{\boldsymbol{W}}_{k}^{\mathrm{opt}},
\end{eqnarray}
with $\textrm{rank}(\hat{\boldsymbol{W}}_{k}^{\mathrm{opt}})\leq1$, $\forall k\in\mathcal{K}$.
As a result, the constructed solution$\{\bar{\boldsymbol{W}}_{k}^{\mathrm{opt}}\}$
and $\bar{\boldsymbol{S}}^{\mathrm{opt}}$, and $\{\hat{\boldsymbol{W}}_{k}^{\mathrm{opt}}\}$
achieves the same optimal objective function and satisfies the rank
constraints in \eqref{eq:rank}, which is thus  optimal to problem (P3).}
\end{prop}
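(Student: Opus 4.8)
The plan is to transplant the rank-reduction argument of Proposition~3 onto the reduced $N\times N$ blocks $\hat{\boldsymbol{W}}_k$, so that the coordinated block-diagonal structure \eqref{eq:coordinated3} is respected at every step. First I would record the consequence of the Kronecker structure $\boldsymbol{W}_k=\boldsymbol{J}_i\otimes\hat{\boldsymbol{W}}_k$ for $k\in\mathcal{K}_i$: since $\boldsymbol{J}_i$ places the only nonzero block in position $i$, every per-CU quantity collapses to the serving transmitter, i.e.\ $\boldsymbol{h}_k^H\boldsymbol{W}_k\boldsymbol{h}_k=\boldsymbol{h}_{i,k}^H\hat{\boldsymbol{W}}_k\boldsymbol{h}_{i,k}$ and $\boldsymbol{g}_l^H\boldsymbol{W}_k\boldsymbol{g}_l=\hat{\boldsymbol{g}}_{i,l}^H\hat{\boldsymbol{W}}_k\hat{\boldsymbol{g}}_{i,l}$, where $\boldsymbol{h}_{i,k}$ and $\hat{\boldsymbol{g}}_{i,l}$ are the $i$-th $N$-dimensional blocks of $\boldsymbol{h}_k$ and $\boldsymbol{g}_l$. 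This reduces the desired-signal term in \eqref{eq:SINR2} and the eavesdropping term in \eqref{eq:SNR2} to quadratic forms in the $N$-dimensional restricted channels, which is exactly what the construction acts on.

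Next I would establish the three invariances that drive the proof. (i) By construction $\hat{\boldsymbol{W}}_k^{\mathrm{opt}}=(\hat{\boldsymbol{W}}_k^{\star}\boldsymbol{h}_{i,k})(\hat{\boldsymbol{W}}_k^{\star}\boldsymbol{h}_{i,k})^H/(\boldsymbol{h}_{i,k}^H\hat{\boldsymbol{W}}_k^{\star}\boldsymbol{h}_{i,k})$ has rank at most one. (ii) A direct substitution gives $\boldsymbol{h}_{i,k}^H\hat{\boldsymbol{W}}_k^{\mathrm{opt}}\boldsymbol{h}_{i,k}=\boldsymbol{h}_{i,k}^H\hat{\boldsymbol{W}}_k^{\star}\boldsymbol{h}_{i,k}$, so the desired-signal term, hence the numerator of each CU's SINR, is unchanged. (iii) Factoring $\hat{\boldsymbol{W}}_k^{\star}=(\hat{\boldsymbol{W}}_k^{\star})^{1/2}(\hat{\boldsymbol{W}}_k^{\star})^{1/2}$ and setting $\boldsymbol{v}=(\hat{\boldsymbol{W}}_k^{\star})^{1/2}\boldsymbol{h}_{i,k}$, one finds $\hat{\boldsymbol{W}}_k^{\star}-\hat{\boldsymbol{W}}_k^{\mathrm{opt}}=(\hat{\boldsymbol{W}}_k^{\star})^{1/2}\big(\boldsymbol{I}-\boldsymbol{v}\boldsymbol{v}^H/(\boldsymbol{v}^H\boldsymbol{v})\big)(\hat{\boldsymbol{W}}_k^{\star})^{1/2}$, which is positive semidefinite since the inner factor is an orthogonal projector; thus $\hat{\boldsymbol{W}}_k^{\mathrm{opt}}\preceq\hat{\boldsymbol{W}}_k^{\star}$. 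Since $\bar{\boldsymbol{R}}^{\mathrm{opt}}=\sum_k\bar{\boldsymbol{W}}_k^{\mathrm{opt}}+\bar{\boldsymbol{S}}^{\mathrm{opt}}=\boldsymbol{R}^{\star}$ by definition of $\bar{\boldsymbol{S}}^{\mathrm{opt}}$, the full transmit covariance is untouched, so the objective $\boldsymbol{\mu}^T(\boldsymbol{\Psi}^H\boldsymbol{\Psi})^{-1}\boldsymbol{\mu}$, the power constraints \eqref{eq:power2-1}, and the sensing-eavesdropping constraints \eqref{eq:sensing eavesdropping2} are all preserved because they depend on $\boldsymbol{R}$ alone; invariance~(ii) then keeps the SINR constraint \eqref{eq:SINR2} satisfied, and invariance~(iii) gives $\hat{\boldsymbol{g}}_{i,l}^H\hat{\boldsymbol{W}}_k^{\mathrm{opt}}\hat{\boldsymbol{g}}_{i,l}\le\hat{\boldsymbol{g}}_{i,l}^H\hat{\boldsymbol{W}}_k^{\star}\hat{\boldsymbol{g}}_{i,l}\le\Omega\sigma^2$, so \eqref{eq:SNR2} is met.

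The step I expect to require the most care is the remaining feasibility, namely $\bar{\boldsymbol{S}}^{\mathrm{opt}}\succeq\boldsymbol{0}$ together with the coordinated structure \eqref{eq:coordinated3}. The structural constraint is automatic, since $\bar{\boldsymbol{W}}_k^{\mathrm{opt}}=\boldsymbol{J}_i\otimes\hat{\boldsymbol{W}}_k^{\mathrm{opt}}$ is of the prescribed form. For semidefiniteness I would write $\bar{\boldsymbol{S}}^{\mathrm{opt}}=\boldsymbol{S}^{\star}+\sum_k(\boldsymbol{W}_k^{\star}-\bar{\boldsymbol{W}}_k^{\mathrm{opt}})$ and observe that, unlike in Proposition~3, each summand now carries a Kronecker factor, $\boldsymbol{W}_k^{\star}-\bar{\boldsymbol{W}}_k^{\mathrm{opt}}=\boldsymbol{J}_i\otimes(\hat{\boldsymbol{W}}_k^{\star}-\hat{\boldsymbol{W}}_k^{\mathrm{opt}})$. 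Because $\boldsymbol{J}_i\succeq\boldsymbol{0}$ and $\hat{\boldsymbol{W}}_k^{\star}-\hat{\boldsymbol{W}}_k^{\mathrm{opt}}\succeq\boldsymbol{0}$ by~(iii), the Kronecker product of two positive semidefinite matrices is positive semidefinite, so each summand is PSD, and adding $\boldsymbol{S}^{\star}\succeq\boldsymbol{0}$ yields $\bar{\boldsymbol{S}}^{\mathrm{opt}}\succeq\boldsymbol{0}$. Since the constructed point is thus feasible for (P3.1), attains the same objective as the relaxation optimum, and meets the rank constraints, it is globally optimal for problem (P3), completing the argument.
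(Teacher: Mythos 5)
Your proof is correct and follows essentially the same route as the paper: the paper proves this proposition simply by invoking the argument of Proposition 3 (Appendix C) --- preservation of the total covariance $\boldsymbol{R}^{\star}$ and hence of the objective, power, and sensing-eavesdropping constraints, preservation of the desired-signal terms $\boldsymbol{h}_{k}^{H}\boldsymbol{W}_{k}\boldsymbol{h}_{k}$, and positive semidefiniteness of $\boldsymbol{W}_{k}^{\star}-\boldsymbol{W}_{k}^{\mathrm{opt}}$ (shown there via Cauchy--Schwarz, equivalently your projector factorization) --- and your three invariances are exactly that argument transplanted onto the $N\times N$ blocks. Your write-up is in fact more complete than the paper's omitted proof, since you explicitly check $\bar{\boldsymbol{S}}^{\mathrm{opt}}\succeq\boldsymbol{0}$ and the rank/structure of $\bar{\boldsymbol{W}}_{k}^{\mathrm{opt}}$ through the Kronecker factor $\boldsymbol{J}_{i}\otimes(\hat{\boldsymbol{W}}_{k}^{\star}-\hat{\boldsymbol{W}}_{k}^{\mathrm{opt}})$, points the paper leaves implicit.
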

\begin{proof}
This proposition can be similarly proved as for Proposition 3. Therefore,
the details are omitted. 
\end{proof}
\begin{rem}
It is crucial to highlight
that the objective of maximizing the sensing SNR in a cell-free system
is equivalent to maximizing the trace of our covariance matrix in
\eqref{eq:R_y}. While this design prioritizes maximizing the transmitted
power toward the target direction, it may overlook the significance
of non-diagonal elements in $\boldsymbol{R}_{\Phi}$. Indeed, these
non-diagonal elements represent the covariance of signals from different
ISAC transmitters, indicating that the sensing SNR maximization design
might not fully leverage the joint signal processing gain in the cell-free
network to maximize the detection probability, and thus may lead to
sub-optimal performance in general. Next, the coordinated beamforming
 neglects the importance of the covariance of signals from different
ISAC transmitters by allowing the independent signal
transmission at ISAC transmitters for achieving lower implementation complexity. This, however, may lead to sub-optimal performance as
compared to the optimal cell-free architecture. The arguments will be validated in Section VI. 
\end{rem}

\section{Numerical Results}

\begin{figure}[H]
\centering\includegraphics[scale=0.4]{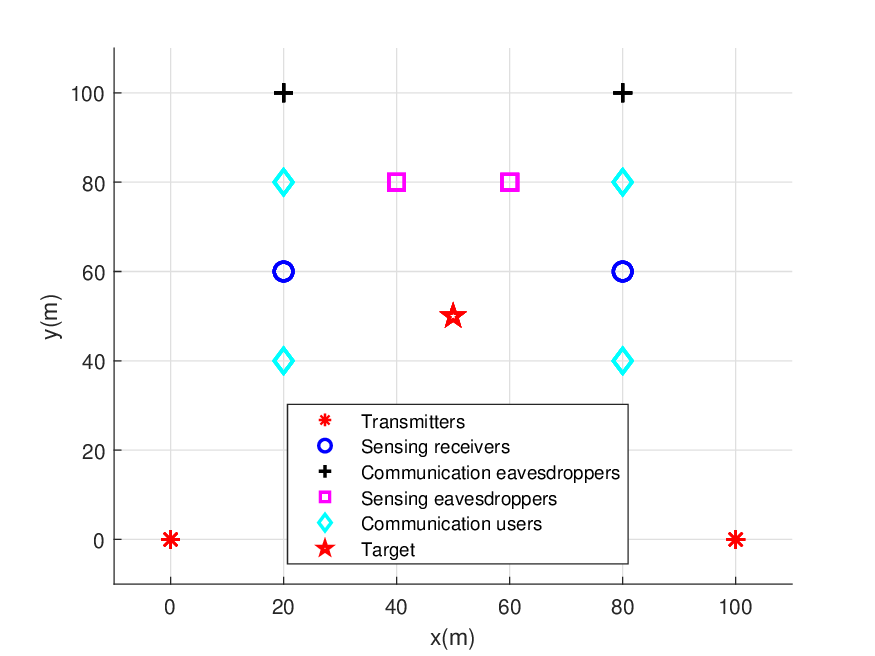}\caption{Simulation topology of the secure cell-free ISAC system.}
\end{figure}

In this section, we provide numerical results to validate the performance
of our proposed joint beamforming solution for the secure cell-free ISAC
system. If not explicitly specified, the topology of our considered
cell-free ISAC system is illustrated in Fig. 2. Within a $100\textrm{ m}\times100\textrm{ m}$
field, we position $M_{t}=2$ ISAC transmitters at $[0\textrm{ m},0\textrm{ m}]$
and $[100\textrm{ m},0\textrm{ m}]$, and $M_{r}=2$ receivers at
$[20\textrm{ m},60\textrm{ m}]$ and $[80\textrm{ m},60\textrm{ m}]$,
respectively. The target of interest is situated at $[50\textrm{ m},50\textrm{ m}]$.
Additionally, there are $K=4$ CUs located at $[20\textrm{ m},40\textrm{ m}]$,
$[20\textrm{ m},80\textrm{ m}]$, $[80\textrm{ m},40\textrm{ m}]$,
and $[80\textrm{ m},80\textrm{ m}]$, $L=2$ information eavesdroppers positioned at $[20\textrm{ m},100\textrm{ m}]$ and $[80\textrm{ m},100\textrm{ m}]$,
as well as $Q=2$ sensing eavesdroppers placed at $[40\textrm{ m},80\textrm{ m}]$
and $[60\textrm{ m},80\textrm{ m}]$. The path-loss at a reference
distance of $\textrm{1 m}$ is $-40\textrm{ dB}$, while the path-loss
exponent is standardized at 3. The noise power is set as $\sigma^{2}=\sigma_{s}^{2}=-100\textrm{ dBm}$.
The reflection coefficient w.r.t. RCS is a Gaussian random variable
with zero mean and variance $0.01$. The communication channel between
each ISAC transmitter and the CUs or information eavesdroppers is modeled
using Rician fading with a Rician factor of $K_{r}=5\textrm{ dB}$.
The clutter channel of sensing eavesdroppers is generated by two randomly
located scatterers in this area. Furthermore, the communication SINR
threshold is established at $\Gamma=10\textrm{ dB}$, and the information
eavesdropping SNR threshold is set as $\Omega=5\textrm{ dB}$. Moreover,
the false alarm probability is standardized at 0.05, while the threshold
for eavesdropping probability is set as $\Lambda=0.4$. 

For the purpose of comparison, we also illustrate an upper bound on the detection probability through
a sensing only scheme, where only the transmit power constraints in
problem (P1) are taken into account. This sensing only scheme serves
as an upper limit in the cell-free ISAC system.

\begin{figure}[H]
\centering\includegraphics[scale=0.45]{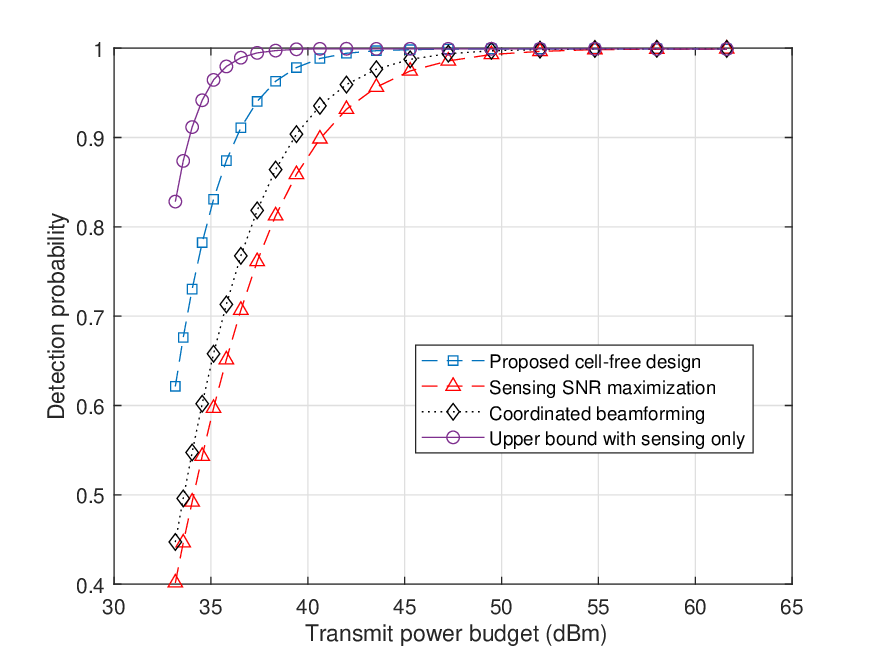}\centering\caption{\label{fig:3}The detection probability versus transmit power budget
$P$.}
\end{figure}

Fig. \ref{fig:3} shows the detection probability versus the transmit
power budget $P$ at each transmitter. Notably, the sensing only
scheme serves as a performance upper bound for maximizing the detection
probability in this scenario. Furthermore, the two benchmark schemes
exhibit similar performance and maintain a gap to our proposed design.
This arises from the fact that the two benchmark schemes consider
only the diagonal elements in the covariance matrix of $\boldsymbol{R}_{\Phi}$
individually designing signals at each transmitter. In contrast, the
proposed optimal solution, leveraging joint processing of sensing
data in the central controller, can effectively exploit the covariance
between different transmitters to maximize the detection probability.
\begin{figure}[H]
\centering\includegraphics[scale=0.45]{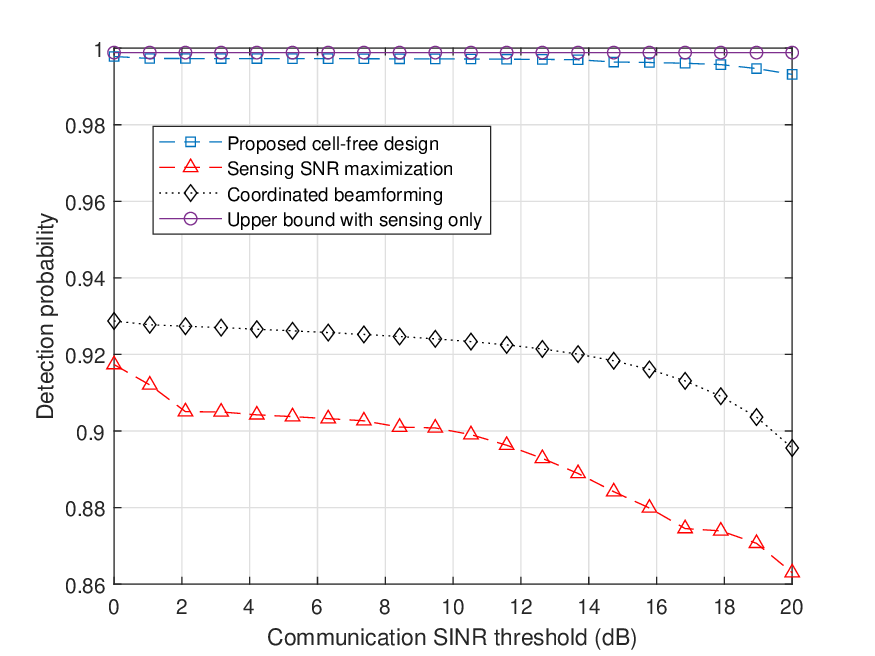}\centering\caption{\label{fig:4}The detection probability versus communication SINR
threshold $\Gamma$.}
\end{figure}

Fig. \ref{fig:4} shows the detection probability versus the communication
SINR threshold $\Gamma$ with $P=43\textrm{ dBm}$. It is observed
that the detection performance of all three methods decreases
as $\Gamma$ increases. This is due
to the fact that as the SINR requirements become more stringent, more
resource is allocated to communication to satisfy the need. Besides,
our proposed design demonstrates its excellent robustness against
the increase of communication SINR thresholds. In particular, the
performance of the SNR maximization design experiences a rapid reduction
with $\Gamma$ increasing. In contrast, the proposed design and
coordinated beamforming design are generally insensitive to the variation
of $\Gamma$. This is because that the SNR
maximization method is more sensitive to the spatial-domain energy
distribution. A higher communication threshold directly leads to less
energy at the target direction, and thus diminishes its performance. 
\begin{figure}[H]
\centering\includegraphics[scale=0.45]{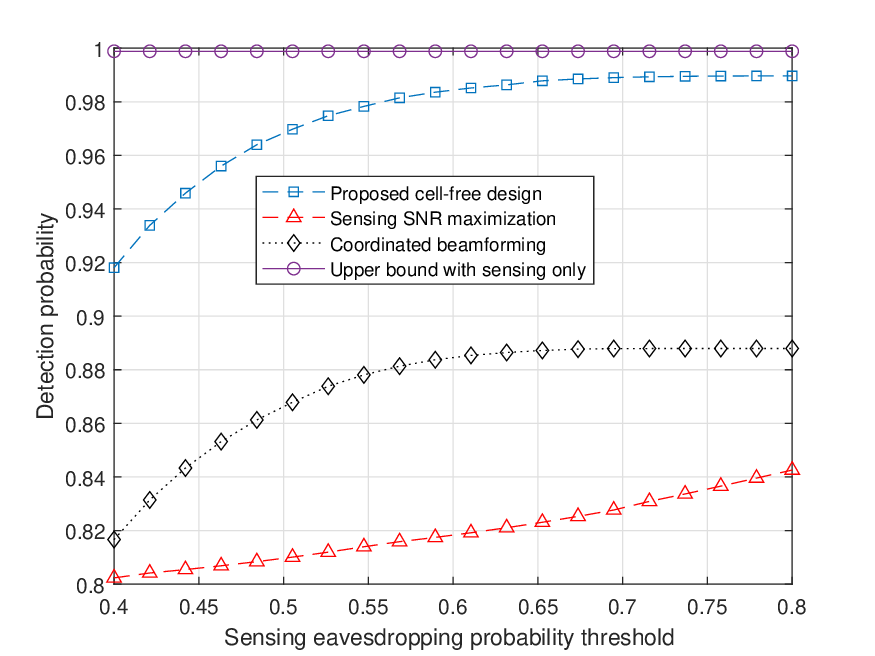}\centering\caption{\label{fig:5}The detection probability versus sensing eavesdropping
probability threshold $\Lambda$.}
\end{figure}

Fig. \ref{fig:5} shows the detection probability versus the sensing eavesdropping
probability threshold $\Lambda$ with power budget $P=40\textrm{ dBm}$.
It is evident that a more relaxed sensing eavesdropping probability
constraint or a larger value of $\Lambda$ results in a higher detection
probability. Additionally, the proposed design consistently outperforms
the coordinated beamforming design, and achieves a detection probability
that is 10\% higher than that by the sensing SNR maximization and the coordinated beamforming. This observation underscores the
superiority of the cell-free architecture.
\begin{figure}[H]
\centering\includegraphics[scale=0.45]{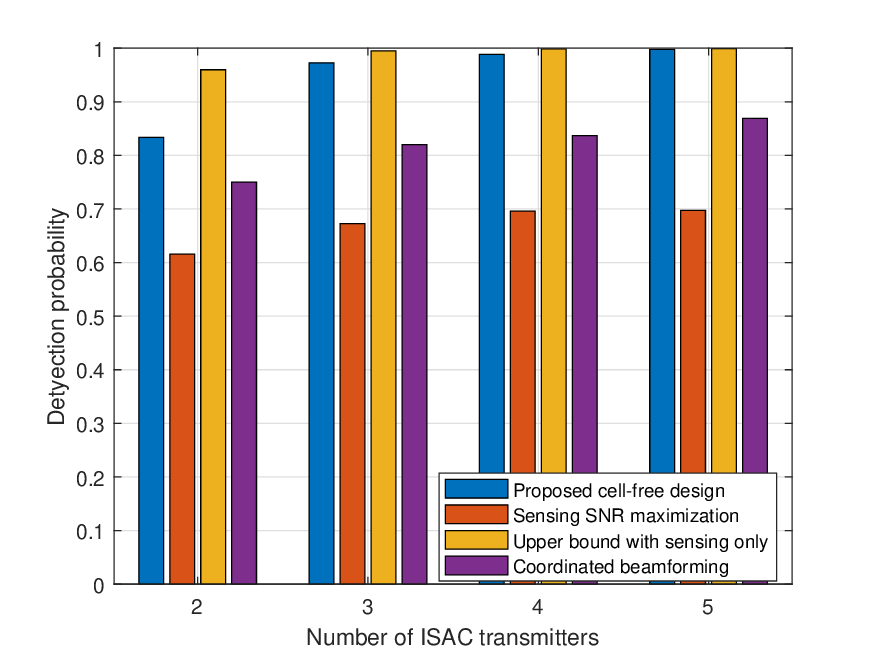}\centering\caption{\label{fig:7}The detection probability versus number of ISAC transmitters
$M_{t}$.}
\end{figure}

Fig. \ref{fig:7} shows the detection probability versus the number
of ISAC transmitters $M_{t}$, in which the ISAC transmitters are randomly located in the region, and the total transmit power is fixed with $P=43\textrm{ dBm}/M_{t}$.
In this case, it is evident that the detection
probability increases with the number of ISAC transmitters. This is attributed
to the fact that more ISAC transmitters allow for information acquisition
from different angles, providing a spatial diversity gain for the
multi-static sensing, thereby improving the multi-static sensing performance.

\section{Conclusion}

This paper investigated the joint design of transmit beamforming for
a secure cell-free ISAC system, where multiple ISAC transmitters collaboratively
serve multiple CUs while concurrently performing target detection
in a specified area. The scenario included the presence of multiple
information eavesdroppers attempting to intercept confidential CU
data and sensing eavesdroppers aiming to extract target information
from received echo signals. We formulated a transmit beamforming optimization
problem with the objective of maximizing the detection probability while
satisfying the SINR constraints for CUs, SNR constraints for information
eavesdroppers, sensing eavesdropping detection probability constraints
for sensing eavesdroppers, and transmit power constraints for each transmitter.
We obtained the global optimal solution using a SDR-based method,
with a rigorous proof of the relaxation's tightness. Furthermore,
we formulated two alternative designs based on sensing SNR maximization
and coordinated beamforming.
Numerical results validated the effectiveness of our proposed design,
as compared with the two alternative designs.

There are several interesting extensions for future research in secure
cell-free ISAC systems. First, this paper can be extended to other
scenarios with practical constraints, such as imperfect channel state information (CSI) and time synchronization among 
ISAC transmitters. Furthermore, investigating the operation of distributed
optimization in large-scale cell-free ISAC networks presents another
direction for future exploration. These extensions hold the potential
to enhance the understanding and performance of secure cell-free ISAC
systems in diverse and expansive scenarios.

\appendices{}

\section{Proof of Proposition 1}

For a given $\boldsymbol{\alpha}$, the probability density function (pdf) of $\boldsymbol{y}_{s}$
is given as 
\begin{equation}
p(\boldsymbol{r}_{\mathrm{S}};\boldsymbol{\alpha})=\frac{1}{(\pi\sigma_{s}^{2})^{NM_{r}T}}\exp\big(-\frac{1}{\sigma_{s}^{2}}(\boldsymbol{r}_{\mathrm{S}}-\boldsymbol{\Psi}\boldsymbol{\alpha})^{H}(\boldsymbol{r}_{\mathrm{S}}-\boldsymbol{\Psi}\boldsymbol{\alpha})\big).
\end{equation}
The GLRT detector is expressed as 
\begin{equation}
L_{\mathrm{G}}(\boldsymbol{r}_{\mathrm{S}})=\frac{p(\boldsymbol{r}_{\mathrm{S}};\hat{\boldsymbol{\alpha}}_{1})}{p(\boldsymbol{r}_{\mathrm{S}};\hat{\boldsymbol{\alpha}}_{0})}\gtreqless\Gamma,
\end{equation}
where $\Gamma$ is a given decision threshold w.r.t. the false alarm
probability, and $\hat{\boldsymbol{\alpha}}_{1}$ and $\hat{\boldsymbol{\alpha}}_{0}$
are the maximum likelihood estimation (MLE) for $\boldsymbol{\alpha}$
under conditions $\mathcal{H}_{1}$ and $\mathcal{H}_{0}$, respectively.
Specifically, $\hat{\boldsymbol{\alpha}}_{1}$ is an unconstrained
MLE calculated as 
\begin{equation}
\hat{\boldsymbol{\alpha}}_{1}=(\boldsymbol{\Psi}^{H}\boldsymbol{\Psi})^{-1}\boldsymbol{\Psi}^{H}\boldsymbol{r}_{\mathrm{S}}.
\end{equation}
On the other hand, $\hat{\boldsymbol{\alpha}}_{0}$ is a constrained
MLE subject to the constraint $\boldsymbol{\mu}^{T}\hat{\boldsymbol{\alpha}}_{0}=0$,
which can be derived via the Lagrange multiplier method \cite{kay1993fundamentals}.
The problem of calculating $\hat{\boldsymbol{\alpha}}_{0}$ is given
as 
\begin{eqnarray*}
\underset{\hat{\boldsymbol{\alpha}}_{0}}{\min} & \|\boldsymbol{r}_{\mathrm{S}}-\boldsymbol{\Psi}\hat{\boldsymbol{\alpha}}_{0}\|^{2} & \textrm{s.t. }\boldsymbol{\mu}^{T}\hat{\boldsymbol{\alpha}}_{0}=0.
\end{eqnarray*}
The Lagrangian is given as 
\begin{equation}
L(\hat{\boldsymbol{\alpha}}_{0})=\|\boldsymbol{r}_{\mathrm{S}}-\boldsymbol{\Psi}\hat{\boldsymbol{\alpha}}_{0}\|^{2}+\rho\boldsymbol{\mu}^{T}\hat{\boldsymbol{\alpha}}_{0},
\end{equation}
where $\rho$ represents the Lagrange multiplier for constraint $\boldsymbol{\mu}^{T}\hat{\boldsymbol{\alpha}}_{0}=0$.
By checking the derivation of $L(\hat{\boldsymbol{\alpha}}_{0})$,
we have 
\begin{eqnarray}
\frac{\partial L(\hat{\boldsymbol{\alpha}}_{0})}{\partial\hat{\boldsymbol{\alpha}}_{0}} & = & -2\boldsymbol{\Psi}^{H}\boldsymbol{r}_{\mathrm{S}}+2\boldsymbol{\Psi}^{H}\boldsymbol{\Psi}\hat{\boldsymbol{\alpha}}_{0}+\rho\boldsymbol{\mu}.\label{eq:derivative}
\end{eqnarray}
By setting the derivative in \eqref{eq:derivative} equal to $\boldsymbol{0}$,
we have 
\begin{equation}
\hat{\boldsymbol{\alpha}}_{0}=\underset{\hat{\boldsymbol{\alpha}}_{1}}{\underbrace{(\boldsymbol{\Psi}^{H}\boldsymbol{\Psi})^{-1}\boldsymbol{\Psi}^{H}\boldsymbol{r}_{\mathrm{S}}}}-(\boldsymbol{\Psi}^{H}\boldsymbol{\Psi})^{-1}\frac{\rho\boldsymbol{\mu}}{2}.
\end{equation}
Then, we can easily calculate $\rho\geq0$ as 
\begin{equation}
\begin{aligned} & \boldsymbol{\mu}^{T}\hat{\boldsymbol{\alpha}}_{0}=0\\
\Rightarrow & \boldsymbol{\mu}^{T}[(\boldsymbol{\Psi}^{H}\boldsymbol{\Psi})^{-1}\boldsymbol{\Psi}^{H}\boldsymbol{r}_{\mathrm{S}}-(\boldsymbol{\Psi}^{H}\boldsymbol{\Psi})^{-1}\frac{\rho\boldsymbol{\mu}}{2}]=0\\
\Rightarrow & \frac{\rho}{2}=\frac{\boldsymbol{\mu}^{T}(\boldsymbol{\Psi}^{H}\boldsymbol{\Psi})^{-1}\boldsymbol{\Psi}^{H}\boldsymbol{r}_{\mathrm{S}}}{\boldsymbol{\mu}^{T}(\boldsymbol{\Psi}^{H}\boldsymbol{\Psi})^{-1}\boldsymbol{\mu}}=\frac{\boldsymbol{\mu}^{T}\hat{\boldsymbol{\alpha}}_{1}}{\boldsymbol{\mu}^{T}(\boldsymbol{\Psi}^{H}\boldsymbol{\Psi})^{-1}\boldsymbol{\mu}}.
\end{aligned}
\end{equation}
 Consequently, $\hat{\boldsymbol{\alpha}}_{0}$ is given as 
\begin{equation}
\hat{\boldsymbol{\alpha}}_{0}=\hat{\boldsymbol{\alpha}}_{1}-\frac{(\boldsymbol{\Psi}^{H}\boldsymbol{\Psi})^{-1}\boldsymbol{\mu}^{T}\hat{\boldsymbol{\alpha}}_{1}\boldsymbol{\mu}}{\boldsymbol{\mu}^{T}(\boldsymbol{\Psi}^{H}\boldsymbol{\Psi})^{-1}\boldsymbol{\mu}}.
\end{equation}
As a result, the expression for $\ln(L_{\mathrm{G}}(\boldsymbol{y}_{s}))$
is
\begin{eqnarray}
\ln\big(L_{\mathrm{G}}(\boldsymbol{r}_{\mathrm{S}})\big) & = & -\frac{1}{\sigma_{s}^{2}}[(\boldsymbol{r}_{\mathrm{S}}-\boldsymbol{\Psi}\hat{\boldsymbol{\alpha}}_{1})^{H}(\boldsymbol{r}_{\mathrm{S}}-\boldsymbol{\Psi}\hat{\boldsymbol{\alpha}}_{1})\nonumber \\
 &  & -(\boldsymbol{r}_{\mathrm{S}}-\boldsymbol{\Psi}\hat{\boldsymbol{\alpha}}_{0})^{H}(\boldsymbol{r}_{\mathrm{S}}-\boldsymbol{\Psi}\hat{\boldsymbol{\alpha}}_{0})].
\end{eqnarray}
After some mathematical derivation, we have 
\begin{equation}
\ln\big(L_{\mathrm{G}}(\boldsymbol{r}_{\mathrm{S}})\big)=\frac{|\boldsymbol{\mu}^{T}\hat{\boldsymbol{\alpha}}_{1}|^{2}}{\sigma_{s}^{2}\boldsymbol{\mu}^{T}(\boldsymbol{\Psi}^{H}\boldsymbol{\Psi})^{-1}\boldsymbol{\mu}}.
\end{equation}
Then we calculate the distribution of $\ln\big(L_{\mathrm{G}}(\boldsymbol{r}_{\mathrm{S}})\big)$
under two different conditions. Notice that $\hat{\boldsymbol{\alpha}}_{1}\sim\mathcal{CN}\big(\boldsymbol{\alpha},\sigma_{s}^{2}(\boldsymbol{\Psi}^{H}\boldsymbol{\Psi})^{-1}\big)$, and accordingly
we have 
\begin{equation}
\boldsymbol{\mu}^{T}\hat{\boldsymbol{\alpha}}_{1}\sim\begin{cases}
\mathcal{CN}\big(\boldsymbol{0},\sigma_{s}^{2}\boldsymbol{\mu}^{T}(\boldsymbol{\Psi}^{H}\boldsymbol{\Psi})^{-1}\boldsymbol{\mu}\big), & \mathcal{H}_{0},\\
\mathcal{CN}\big(\boldsymbol{\mu}^{T}\boldsymbol{\alpha},\sigma_{s}^{2}\boldsymbol{\mu}^{T}(\boldsymbol{\Psi}^{H}\boldsymbol{\Psi})^{-1}\boldsymbol{\mu}\big), & \mathcal{H}_{1}.
\end{cases}
\end{equation}
As a result, the distribution of $\ln\big(L_{\mathrm{G}}(\boldsymbol{r}_{\mathrm{S}})\big)$
is given as 
\begin{equation}
\ln(L_{\mathrm{G}}(\boldsymbol{y}_{s}))\sim\begin{cases}
\chi^{2}, & \mathcal{H}_{0},\\
\tilde{\chi}^{2}(\lambda), & \mathcal{H}_{1},
\end{cases}
\end{equation}
where $\lambda=\frac{|\boldsymbol{\mu}^{T}\boldsymbol{\alpha}|^{2}}{\sigma_{s}^{2}\boldsymbol{\mu}^{T}(\boldsymbol{\Psi}^{H}\boldsymbol{\Psi})^{-1}\boldsymbol{\mu}}$.
This completes the proof.

\section{Proof of Proposition 2}

Based on the Neyman-Pearson criterion, the optimal detection rule
for the sensing eavesdropper is given as 
\begin{equation}
L_{\mathrm{G}}(\tilde{r}_{q}(t))=\frac{p_{0}(\tilde{r}_{q}(t))}{p_{1}(\tilde{r}_{q}(t))}\gtreqless1,
\end{equation}
which can be expressed as 
\begin{equation}
|\tilde{r}_{q}(t)|^{2}\gtreqless\frac{\zeta_{q}(\{\boldsymbol{w}_{k}\},\boldsymbol{S})\beta_{q}(\{\boldsymbol{w}_{k}\},\boldsymbol{S})}{\beta_{q}(\{\boldsymbol{w}_{k}\},\boldsymbol{S})-\zeta_{q}(\{\boldsymbol{w}_{k}\},\boldsymbol{S})}\ln(\frac{\beta_{q}(\{\boldsymbol{w}_{k}\},\boldsymbol{S})}{\zeta_{q}(\{\boldsymbol{w}_{k}\},\boldsymbol{S})}).\label{eq:70}
\end{equation}
According to \eqref{eq:pdf2}, the cumulative density functions (CDFs)
under $\mathcal{\tilde{H}}_{0}$ and $\mathcal{\tilde{H}}_{1}$ are respectively
given as 
\begin{eqnarray}
\mathrm{Pr}(|\tilde{r}_{q}(t)|^{2}\big|\mathcal{\tilde{H}}_{0}) & = & 1-\exp(-\frac{|\tilde{r}_{q}(t)|^{2}}{\zeta_{q}(\{\boldsymbol{w}_{k}\},\boldsymbol{S})}),\\
\mathrm{Pr}(|\tilde{r}_{q}(t)|^{2}\big|\mathcal{\tilde{H}}_{1}) & = & 1-\exp(-\frac{|\tilde{r}_{q}(t)|^{2}}{\beta_{q}(\{\boldsymbol{w}_{k}\},\boldsymbol{S})}).\label{eq:72}
\end{eqnarray}
By substituting \eqref{eq:70} into \eqref{eq:72}, we obtain 
\begin{equation}
\ensuremath{\tilde{p}_{q}(\{\boldsymbol{w}_{k}\},\boldsymbol{S})}=(\frac{\beta_{q}(\{\boldsymbol{w}_{k}\},\boldsymbol{S})}{\zeta_{q}(\{\boldsymbol{w}_{k}\},\boldsymbol{S})})^{-\frac{\zeta_{q}(\{\boldsymbol{w}_{k}\},\boldsymbol{S})}{\beta_{q}(\{\boldsymbol{w}_{k}\},\boldsymbol{S})-\zeta_{q}(\{\boldsymbol{w}_{k}\},\boldsymbol{S})}}.
\end{equation}
This completes the proof.

\section{Proof of Proposition 3}

First, we consider the optimal objective function value. The optimal
covariance matrix is given as\vspace{-0.1cm}
\begin{equation}
\boldsymbol{R_{x}}^{*}=\sum_{k=1}^{K}\boldsymbol{W}_{k}^{\mathrm{opt}}+\boldsymbol{S}^{\mathrm{opt}}=\sum_{k=1}^{K}\boldsymbol{W}_{k}^{*}+\boldsymbol{S}^{*}.
\end{equation}
This implies that the objective function values achieved by $\{\boldsymbol{W}_{k}^{\mathrm{opt}}\}$
and $\boldsymbol{S}^{\mathrm{opt}}$ are equivalent to that of $\{\boldsymbol{W}_{k}^{*}\}$
and $\boldsymbol{S}^{*}$. Furthermore, the constraints specified
in \eqref{eq:sensing eavesdropper} and \eqref{eq:rank} are satisfied.
Subsequently, by observing that the SINR constraints in \eqref{eq:SINR1}
are satisfied due to the equality $\boldsymbol{h}_{k}^{H}\boldsymbol{W}_{k}^{*}\boldsymbol{h}_{k}=\boldsymbol{h}_{k}^{H}\boldsymbol{W}_{k}^{\mathrm{opt}}\boldsymbol{h}_{k}$,
$\forall k\in\mathcal{K}$. Then, by letting $\boldsymbol{v}\in\mathbb{C}^{NM_{t}\times1}$
denote an arbitrary vector, we have\vspace{-0.2cm} 
\begin{eqnarray}
\boldsymbol{v}^{H}(\boldsymbol{W}_{k}^{*}-\boldsymbol{W}_{k}^{\mathrm{opt}})\boldsymbol{v} & = & \boldsymbol{v}^{H}\boldsymbol{W}_{k}^{*}\boldsymbol{v}-\frac{|\boldsymbol{v}^{H}\boldsymbol{W}_{k}^{*}\boldsymbol{h}_{k}|^{2}}{\boldsymbol{h}_{k}^{H}\boldsymbol{W}_{k}^{*}\boldsymbol{h}_{k}}.
\end{eqnarray}
Based on the Cauchy-Schwarz inequality \cite{wang2023globally}, we
have $(\boldsymbol{v}^{H}\boldsymbol{W}_{k}^{*}\boldsymbol{v})(\boldsymbol{h}_{k}^{H}\boldsymbol{W}_{k}^{*}\boldsymbol{h}_{k})\geq|\boldsymbol{v}^{H}\boldsymbol{W}_{k}^{*}\boldsymbol{h}_{k}|^{2}$.
As a result, $\boldsymbol{W}_{k}^{*}-\boldsymbol{W}_{k}^{\mathrm{opt}}\succeq\boldsymbol{0}$,
$\forall k\in\mathcal{K}$. Consequently, we have 
\begin{equation}
\boldsymbol{g}_{l}^{H}\boldsymbol{W}_{k}^{\mathrm{opt}}\boldsymbol{g}_{l}\leq\boldsymbol{g}_{l}^{H}\boldsymbol{W}_{k}^{*}\boldsymbol{g}_{l}\leq\Lambda\sigma^{2},\forall l\in\mathcal{L}.
\end{equation}
Hence, the constraints in \eqref{eq:SNR1} are also satisfied. This
completes the proof.

{\footnotesize{}\bibliographystyle{IEEEtran}
\bibliography{IEEEabrv,IEEEexample,my_ref}

\begin{thebibliography}{10}
\providecommand{\url}[1]{#1}
\csname url@samestyle\endcsname
\providecommand{\newblock}{\relax}
\providecommand{\bibinfo}[2]{#2}
\providecommand{\BIBentrySTDinterwordspacing}{\spaceskip=0pt\relax}
\providecommand{\BIBentryALTinterwordstretchfactor}{4}
\providecommand{\BIBentryALTinterwordspacing}{\spaceskip=\fontdimen2\font plus
\BIBentryALTinterwordstretchfactor\fontdimen3\font minus
  \fontdimen4\font\relax}
\providecommand{\BIBforeignlanguage}[2]{{%
\expandafter\ifx\csname l@#1\endcsname\relax
\typeout{** WARNING: IEEEtran.bst: No hyphenation pattern has been}%
\typeout{** loaded for the language `#1'. Using the pattern for}%
\typeout{** the default language instead.}%
\else
\language=\csname l@#1\endcsname
\fi
#2}}
\providecommand{\BIBdecl}{\relax}
\BIBdecl

\bibitem{itu2030}
\BIBentryALTinterwordspacing
{ITU-R}, ``{Framework and Overall Objectives of the Future Development of IMT
  for 2030 and Beyond},'' International Telecommunication Union -
  Radiocommunication Sector (ITU-R), Internal Document, September 2023.
  [Online]. Available:
  \url{https://www.itu.int/en/ITU-R/study-groups/rsg5/rwp5d/imt-2030/Pages/default.aspx}
\BIBentrySTDinterwordspacing

\bibitem{khan2022digital}
L.~U. Khan, W.~Saad, D.~Niyato, Z.~Han, and C.~S. Hong, ``Digital-twin-enabled
  {6G: Vision}, architectural trends, and future directions,'' \emph{IEEE
  Commun. Mag.}, vol.~60, no.~1, pp. 74--80, 2nd quarter 2022.

\bibitem{li2023csi}
L.~Li, X.~Zeng, Y.-F. Liu, Y.~Xu, and T.-H. Chang, ``{CSI} sensing from
  heterogeneous user feedbacks: {A} constrained phase retrieval approach,''
  \emph{{IEEE} Trans. Wireless Commun.}, vol.~22, no.~10, pp. 6930--6945, Feb.
  2023.

\bibitem{liu2021integrated}
F.~Liu, Y.~Cui, C.~Masouros, J.~Xu, T.~X. Han, Y.~C. Eldar, and S.~Buzzi,
  ``Integrated sensing and communications: {Towards} dual-functional wireless
  networks for {6G} and beyond,'' \emph{IEEE J. Sel. Areas Commun.}, vol.~40,
  no.~6, pp. 1728--1767, Jun. 2022.

\bibitem{liu2023seventy}
F.~Liu, L.~Zheng, Y.~Cui, C.~Masouros, A.~P. Petropulu, H.~Griffiths, and Y.~C.
  Eldar, ``Seventy years of radar and communications: {The} road from
  separation to integration,'' \emph{IEEE Signal Process. Mag.}, vol.~40,
  no.~5, pp. 106--121, Jul. 2023.

\bibitem{masouros2021introduction}
C.~Masouros, R.~W. Heath, J.~A. Zhang, Z.~Feng, L.~Zheng, and A.~P. Petropulu,
  ``Introduction to the issue on joint communication and radar sensing for
  emerging applications,'' \emph{IEEE J. Sel. Topics Signal Process.}, vol.~15,
  no.~6, pp. 1290--1294, Dec. 2021.

\bibitem{xiong2023fundamental}
Y.~Xiong, F.~Liu, Y.~Cui, W.~Yuan, T.~X. Han, and G.~Caire, ``On the
  fundamental tradeoff of integrated sensing and communications under
  {Gaussian} channels,'' \emph{{IEEE} Trans. Inf. Theory}, vol.~69, no.~9, pp.
  5723--5751, Jun. 2023.

\bibitem{Haocheng2022}
H.~Hua, T.~X. Han, and J.~Xu, ``{MIMO} integrated sensing and communication:
  {CRB-rate} tradeoff,'' \emph{{IEEE} Trans. Wireless Commun.}, early acces,
  Aug. 2022, doi: 10.1109/TWC.2023.3303326.

\bibitem{ouyang2023integrated}
C.~Ouyang, Y.~Liu, H.~Yang, and N.~Al-Dhahir, ``Integrated sensing and
  communications: {A} mutual information-based framework,'' \emph{IEEE Commun.
  Mag.}, vol.~61, no.~5, pp. 26--32, May. 2023.

\bibitem{xiao2022waveform}
Z.~Xiao and Y.~Zeng, ``Waveform design and performance analysis for full-duplex
  integrated sensing and communication,'' \emph{{IEEE} J. Sel. Areas Commun.},
  vol.~40, no.~6, pp. 1823--1837, Mar. 2022.

\bibitem{liu2020joint}
X.~Liu, T.~Huang, N.~Shlezinger, Y.~Liu, J.~Zhou, and Y.~C. Eldar, ``Joint
  transmit beamforming for multiuser {MIMO} communications and {MIMO} radar,''
  \emph{{IEEE} Trans. Signal Process.}, vol.~68, pp. 3929--3944, Jun. 2020.

\bibitem{9652071}
F.~Liu, Y.-F. Liu, A.~Li, C.~Masouros, and Y.~C. Eldar, ``{Cram{\'e}r-Rao}
  bound optimization for joint radar-communication beamforming,'' \emph{{IEEE}
  Trans. Signal Process.}, vol.~70, pp. 240--253, Dec. 2022.

\bibitem{sohrabi2022active}
F.~Sohrabi, T.~Jiang, W.~Cui, and W.~Yu, ``Active sensing for communications by
  learning,'' \emph{{IEEE} J. Sel. Areas Commun.}, vol.~40, no.~6, pp.
  1780--1794, Mar. 2022.

\bibitem{RahLusJ20}
M.~L. Rahman, J.~A. Zhang, X.~Huang, Y.~J. Guo, and R.~W. Heath, ``Framework
  for a perceptive mobile network using joint communication and radar
  sensing,'' \emph{{IEEE} Trans. Aerosp. Electron. Syst.}, vol.~56, no.~3, pp.
  1926--1941, Jun. 2020.

\bibitem{LiuMasJ18}
F.~Liu, C.~Masouros, A.~Li, H.~Sun, and L.~Hanzo, ``{MU-MIMO} communications
  with {MIMO} radar: {From} co-existence to joint transmission,'' \emph{{IEEE}
  Trans. Wireless Commun.}, vol.~17, no.~4, pp. 2755--2770, Apr. 2018.

\bibitem{Bruno}
Z.~Liu, S.~Aditya, H.~Li, and B.~Clerckx, ``Joint transmit and receive
  beamforming design in full-duplex integrated sensing and communications,''
  \emph{{IEEE} J. Sel. Areas Commun.}, vol.~41, no.~9, pp. 2907--2919, Jun.
  2023.

\bibitem{wang2023globally}
Z.~Wang, J.~Wu, Y.-F. Liu, and F.~Liu, ``Globally optimal beamforming design
  for integrated sensing and communication systems,'' \emph{arXiv preprint
  arXiv:2309.06674}, 2023.

\bibitem{gesbert2010multi}
D.~Gesbert, S.~Hanly, H.~Huang, S.~S. Shitz, O.~Simeone, and W.~Yu,
  ``Multi-cell {MIMO} cooperative networks: {A} new look at interference,''
  \emph{{IEEE} J. Sel. Areas Commun.}, vol.~28, no.~9, pp. 1380--1408, Dec.
  2010.

\bibitem{wu2015cloud}
J.~Wu, Z.~Zhang, Y.~Hong, and Y.~Wen, ``Cloud radio access network {(C-RAN)}: a
  primer,'' \emph{IEEE netw.}, vol.~29, no.~1, pp. 35--41, Jan. 2015.

\bibitem{bjornson2020scalable}
E.~Bj{\"o}rnson and L.~Sanguinetti, ``Scalable cell-free massive {MIMO}
  systems,'' \emph{{IEEE} Trans. Commun.}, vol.~68, no.~7, pp. 4247--4261, Apr.
  2020.

\bibitem{xu2023integrated}
D.~Xu, C.~Liu, S.~Song, and D.~W.~K. Ng, ``Integrated sensing and communication
  in coordinated cellular networks,'' \emph{arXiv preprint arXiv:2305.01213},
  2023.

\bibitem{cheng2023optimal}
G.~Cheng, Y.~Fang, J.~Xu, and D.~W.~K. Ng, ``Optimal coordinated transmit
  beamforming for networked integrated sensing and communications,''
  \emph{arXiv preprint arXiv:2307.05127}, 2023.

\bibitem{huang2022coordinated}
Y.~Huang, Y.~Fang, X.~Li, and J.~Xu, ``Coordinated power control for network
  integrated sensing and communication,'' \emph{{IEEE} Trans. Veh. Technol.},
  vol.~71, no.~12, pp. 13\,361--13\,365, Jul. 2022.

\bibitem{liu2022networked}
L.~Liu, S.~Zhang, R.~Du, T.~X. Han, and S.~Cui, ``Networked sensing in {6G}
  cellular networks: {Opportunities} and challenges,'' \emph{arXiv preprint
  arXiv:2206.00493}, 2022.

\bibitem{cellfree22survey}
S.~Elhoushy, M.~Ibrahim, and W.~Hamouda, ``{Cell-Free massive MIMO: A
  survey},'' \emph{IEEE Commun. Surveys Tuts.}, vol.~24, no.~1, pp. 492--523,
  1st quarter 2022.

\bibitem{Joint23Yu}
Y.~Cao and Q.-Y. Yu, ``Joint resource allocation for user-centric cell-free
  integrated sensing and communication systems,'' \emph{{IEEE} Commun. Lett.},
  vol.~27, no.~9, pp. 2338--2342, Sep. 2023.

\bibitem{demirhan2023cell}
U.~Demirhan and A.~Alkhateeb, ``Cell-free {ISAC} {MIMO} systems: {Joint}
  sensing and communication beamforming,'' \emph{arXiv preprint
  arXiv:2301.11328}, 2023.

\bibitem{behdad2023multi}
Z.~Behdad, {\"O}.~T. Demir, K.~W. Sung, E.~Bj{\"o}rnson, and C.~Cavdar,
  ``Multi-static target detection and power allocation for integrated sensing
  and communication in cell-free massive {MIMO},'' \emph{arXiv preprint
  arXiv:2305.12523}, 2023.

\bibitem{zeng2023integrated}
F.~Zeng, J.~Yu, J.~Li, F.~Liu, D.~Wang, and X.~You, ``Integrated sensing and
  communication for network-assisted full-duplex cell-free distributed massive
  mimo systems,'' \emph{arXiv preprint arXiv:2311.05101}, 2023.

\bibitem{wei2022toward}
Z.~Wei, F.~Liu, C.~Masouros, N.~Su, and A.~P. Petropulu, ``Toward
  multi-functional {6G} wireless networks: Integrating sensing, communication,
  and security,'' \emph{IEEE Commun. Mag.}, vol.~60, no.~4, pp. 65--71, Apr.
  2022.

\bibitem{SuLiuChrJ21}
{N. Su, F. Liu, and C. Masouros}, ``Secure radar-communication systems with
  malicious targets: {Integrating} radar, communications, and jamming
  functionalities,'' \emph{{IEEE} Trans. Wireless Commun.}, vol.~20, no.~1, pp.
  83--95, Jan. 2021.

\bibitem{xu2022robust}
D.~Xu, X.~Yu, D.~W.~K. Ng, A.~Schmeink, and R.~Schober, ``Robust and secure
  resource allocation for {ISAC} systems: {A} novel optimization framework for
  variable-length snapshots,'' \emph{IEEE Trans. Commun.}, vol.~70, no.~12, pp.
  8196--8214, Dec. 2022.

\bibitem{ren23robust}
Z.~Ren, L.~Qiu, J.~Xu, and D.~W.~K. Ng, ``Robust transmit beamforming for
  secure integrated sensing and communication,'' \emph{{IEEE} Trans. Commun.},
  vol.~71, no.~9, pp. 5549--5564, Sep. 2023.

\bibitem{dimas2019radar}
A.~Dimas, M.~A. Clark, B.~Li, K.~Psounis, and A.~P. Petropulu, ``On radar
  privacy in shared spectrum scenarios,'' in \emph{IEEE Int. Conf. Acoust.,
  Speech Signal Process. (ICASSP)}, 2019, pp. 7790--7794.

\bibitem{da2023multi}
I.~W. da~Silva, D.~P. Osorio, and M.~Juntti, ``Multi-static {ISAC} in cell-free
  massive {MIMO}: {Precoder} design and privacy assessment,'' \emph{arXiv
  preprint arXiv:2309.13368}, 2023.

\bibitem{da2023privacy}
------, ``Privacy performance of {MIMO} dual-functional radar-communications
  with internal adversary,'' \emph{arXiv preprint arXiv:2302.06253}, 2023.

\bibitem{luo2010semidefinite}
Z.-Q. Luo, W.-K. Ma, A.~M.-C. So, Y.~Ye, and S.~Zhang, ``Semidefinite
  relaxation of quadratic optimization problems,'' \emph{IEEE Signal Process.
  Mag.}, vol.~27, no.~3, pp. 20--34, Apr. 2010.

\bibitem{10086626}
H.~Hua, J.~Xu, and T.~X. Han, ``Optimal transmit beamforming for integrated
  sensing and communication,'' \emph{{IEEE} Trans. Veh. Technol.}, vol.~72,
  no.~8, pp. 10\,588--10\,603, Mar. 2023.

\bibitem{chen2018design}
X.~Chen, R.~Jia, and D.~W.~K. Ng, ``On the design of massive non-orthogonal
  multiple access with imperfect successive interference cancellation,''
  \emph{{IEEE} Trans. Commun.}, vol.~67, no.~3, pp. 2539--2551, Nov. 2018.

\bibitem{6803957}
D.~E. Hack, L.~K. Patton, B.~Himed, and M.~A. Saville, ``Detection in passive
  {MIMO} radar networks,'' \emph{{IEEE} Trans. Signal Process.}, vol.~62,
  no.~11, pp. 2999--3012, Apr. 2014.

\bibitem{kay2009fundamentals}
S.~M. Kay, \emph{Fundamentals of Statistical Signal Processing, Volume 2:
  Detection Theory}.\hskip 1em plus 0.5em minus 0.4em\relax Pearson Education
  India, 2009.

\bibitem{wang2023sensing}
X.~Wang, Z.~Fei, P.~Liu, J.~A. Zhang, Q.~Wu, and N.~Wu, ``Sensing aided covert
  communications: {Turning} interference into allies,'' \emph{arXiv preprint
  arXiv:2307.11345}, 2023.

\bibitem{horn2012matrix}
R.~A. Horn and C.~R. Johnson, \emph{Matrix Analysis}.\hskip 1em plus 0.5em
  minus 0.4em\relax Cambridge university press, 2012.

\bibitem{grant2014cvx}
M.~Grant and S.~Boyd, ``{CVX}: {Matlab} software for disciplined convex
  programming, version 2.1,'' 2014.

\bibitem{kay1993fundamentals}
S.~M. Kay, \emph{Fundamentals of Statistical Signal Processing: Estimation
  Theory}.\hskip 1em plus 0.5em minus 0.4em\relax Prentice-Hall, Inc., 1993.

\end{thebibliography}
}{\footnotesize\par}
\end{document}